\documentclass[aip,jmp,12pt]{revtex4-1}

\usepackage{epsfig,graphics,graphicx}% Include figure files
\usepackage{dcolumn}% Align table columns on decimal point
\usepackage{bm}% bold math
\usepackage{amsmath,amssymb,amsthm}
\usepackage{centernot}
\usepackage{bbold}
\usepackage{soul}
\usepackage{mathtools}
\usepackage{fullpage}
\usepackage{units}
\usepackage[usenames,dvipsnames]{xcolor}
\usepackage{braket}
\usepackage{indentfirst}
\usepackage{pgf,tikz}
\usepackage{mathrsfs}
\usetikzlibrary{arrows}
%\usepackage{perpage} %the perpage package
%\MakePerPage{footnote} %the perpage package command
%\usepackage[babel=true]{csquotes} 
\usepackage{lipsum}
\usepackage[export]{adjustbox}

%
%--------------------------------------------------------------------------
%---%
% Language:
%--------------------------------------------------------------------------
%---%
%

%\usepackage[brazilian]{babel}
\usepackage[utf8]{inputenc}
\usepackage[T1]{fontenc}

%--------------------------------------------------------------------------
%---%
% Margins:
%--------------------------------------------------------------------------
%---%
%
%\usepackage{fancyhdr}
%\usepackage[A4]{vmargin}
%\usepackage[USletter]{vmargin}
%

%\renewcommand{\thefootnote}{(\arabic{footnote})}

%\addtolength{\voffset}{-1.5cm}
%\addtolength{\textheight}{3.5cm}

%\addtolength{\voffset}{-0.5cm}
%\addtolength{\textheight}{1.5cm}
%\addtolength{\hoffset}{-0.3cm}
%\addtolength{\textwidth}{0.6cm}
%
%
%\setcounter{tocdepth}{5}
%\setcounter{secnumdepth}{5}

%
%--------------------------------------------------------------------------
%---%
% Font:
%--------------------------------------------------------------------------
%---%

%\usepackage{mathpazo}
\usepackage{dsfont}

\usepackage{mathptmx}
\usepackage{amssymb}
\usepackage{amsmath}
\usepackage{latexsym}
\usepackage{amsfonts}

%--------------------------------------------------------------------------
%---%
% PDF:
%--------------------------------------------------------------------------
%---%

\usepackage{hyperref}
\hypersetup{pdfpagemode=UseNone}

%--------------------------------------------------------------------------
%---%
% New environments:
%--------------------------------------------------------------------------
%---%

\newtheorem{theorem}{Theorem}
\newtheorem{definition}[theorem]{Definition}

\newtheorem{lemma}[theorem]{Lemma}
\newtheorem{proposition}[theorem]{Proposition}
\newtheorem{corollary}[theorem]{Corollary}

\newcounter{rem}
\setcounter{rem}{1}

%--------------------------------------------------------------------------
%---%
% My Macros:
%--------------------------------------------------------------------------
%---%

\newcommand{\mc}[1]{\mathcal{#1}}

\newcommand{\td}[1]{\tilde{#1}}

\def\>{\rangle}
\def\<{\langle}

\renewcommand{\rho}{\varrho}

\def\textbf#1{{\bf #1}}

\newcommand{\pandms}{prepare-and-measure statistics }
\newcommand{\Tppm}{\sum_{k,j,i}q_{O}(\tilde{k} \vert k)p(k \vert 
j,i)q_{P}(i \vert \tilde{i} \,\, )q_{M} (j \vert \tilde{j} \,\,)}
\newcommand{\behav}[3]{p({#1} \vert {#2},{#3})}
\newcommand{\behavtd}[3]{p(\tilde{#1} \vert \tilde{#2},\tilde{#3})}

\def\beq{\begin{equation}}
\def\eeq{\end{equation}}
\def\beqa{\begin{eqnarray}}
\def\eeqa{\end{eqnarray}}
\def\eea{\end{array}}
\def\bea{\begin{array}}
\newcommand{\bei}{\begin{itemize}}
\newcommand{\eei}{\end{itemize}}
\newcommand{\bee}{\begin{enumerate}}
\newcommand{\eee}{\end{enumerate}}
\def\bep{\begin{proposition}}
\def\eep{\end{proposition}}
\def\bel{\begin{lemma}}
\def\eel{\end{lemma}}
\def\bet{\begin{theorem}}
\def\eet{\end{theorem}}
\def\bed{\begin{definition}}
\def\eed{\end{definition}}

%

%--------------------------------------------------------------------------
%---%
% Editorial:
%--------------------------------------------------------------------------
%---%

\definecolor{cgreen}{RGB}{26, 199, 76}

\usepackage{soul}
\definecolor{violeta}{cmyk}{0.07,0.90,0,0.34}

%\newcommand{\la}[1]{{\color{red}\small\textit{[[LA: #1]]}}}

%--------------------------------------------------------------------------
%---%
% Document:
%--------------------------------------------------------------------------
%---%

\begin{document}
\title{Resource theory of contextuality for arbitrary 
prepare-and-measure experiments}

\author{Cristhiano Duarte} 
\affiliation{International Institute of Physics, Federal University of Rio 
Grande do Norte, 59078-970, P. O. Box 1613, Natal, Brazil}

\author{Barbara Amaral}
\affiliation{Departamento de F\'isica e Matem\'atica, CAP - Universidade 
Federal de S\~ao Jo\~ao del-Rei, 36.420-000, Ouro Branco, MG, Brazil} 
\affiliation{International Institute of Physics, Federal University of Rio 
Grande do Norte, 59078-970, P. O. Box 1613, Natal, Brazil}

\date{\today}

%%%%%%%%%%%%%%%%%%%%%%%%%%%%%%%%%%%%%%%%%%%%%
%%%%%%%%%%%%%Abstract%%%%%%%%%%%%%%%%%%%%%%%%
%%%%%%%%%%%%%%%%%%%%%%%%%%%%%%%%%%%%%%%%%%%%%

\begin{abstract}
Contextuality has been identified as  a potential 
resource responsible for the quantum advantage in several tasks. It is then necessary to develop a resource-theoretic 
framework for contextuality, both in its standard and generalized forms. Here we provide 
 a formal resource-theoretic approach for generalized
contextuality based on a physically motivated set of free operations with an explicit parametrization.
Then, using  an efficient linear 
programming characterization for the noncontextual set 
of prepared-and-measured statistics, we adapt known resource quantifiers for contextuality and nonlocality  to obtain natural monotones for 
generalized contextuality in arbitrary prepare-and-measure experiments.
\end{abstract}

\pacs{03.65.Ta, 03.65.Ud, 02.10.Ox}

\maketitle

%%%%%%%%%%%%%%%%%%%%%%%%%%%%%%%%%%%%%%%%%%%%%%%%%%%
%%%%%%%%%%%%%%%%%%%%%%%%%%%%%%%%%%%%%%%%%%%%%%%%%%%
%%%%%%%%%%%%%%%%%%%%%%%%%%%%%%%%%%%%%%%%%%%%%%%%%%

%%%%%%%%%%%%%%%%%%%%%%%%%%%%%%%%%%%%%%%%%%%%%%%%%%
%%%%%%%%%%%%%%Introduction%%%%%%%%%%%%%%%%%%%%%%%
%%%%%%%%%%%%%%%%%%%%%%%%%%%%%%%%%%%%%%%%%%%%%%%%5

\section{Introduction}

\maketitle

Prepare-and-measure experiments provide simple situations in which the differences between classical and nonclassical 
probabilistic theories can be explored. One such difference is related to the generalized notion of noncontextuality, a condition imposed
on ontological models that asserts 
that operationally indistinguishable laboratory operations  should be represented identically in the model \cite{RS05,SSW17,MPKRS16,RK17}. 
Inconsistencies between observed data and the existence of such a model can be understood as a signature of nonclassicality.

% It is known
% that the predictions of quantum 
% theory are not consistent with the existence of such models, 

Besides its importance for foundations of physics \cite{Specker60,KS67,RS05}, noncontextuality has been identified  as a potential 
resource responsible for the quantum advantage in several tasks \cite{VFGE12,Raussendorf13,UZZWYDDK13, HWVE14, DGBR14, BDBOR17, SS17, SHP17}. Hence,
it is important  to investigate contextuality in arbitrary prepare-and-measure experiments from the perspective of resource theories,
which give
 powerful frameworks for the formal treatment of a 
physical property as an operational resource %adequate for its 
%characterization, quantification, and manipulation 
\cite{SOT16, 
BG15,CMH16, TF17, CFS16,ACTA17,GL15,BHORS13}.

It is commonly understood, see Refs.~\cite{BG15,ACTA17,TF17,CFS16} for 
instance, 
that such theories consist in the specification of three main components: 
$i)$ a class $\mc{O}$ of \emph{objects}, that represent those entities one 
aims to manipulate seeking for some gain or benefit, and that may possess 
the resource under consideration;
$ii)$ a special class $\mc{F}$ of transformations, called the \emph{free 
operations}, that fulfills the essential requirement of mapping every 
 resourceless object of the theory into another resourceles object, 
\emph{i.e.} a set of transformations that does not create a new 
resource from a resourceless object; and $iii)$ a \emph{measure} or a
\emph{quantifier} that 
outputs the amount of resource a given object contains. For consistency, 
the fundamental requirement for a function to be a valid quantifier is that 
of being a monotone with respect to the considered 
resource: every quantifier is non-increasing under the corresponding free 
operations.

Resource-theoretic approaches for quantum nonlocality are highly 
developed \cite{Barrett05b, Allcock09, GWAN12, Joshi13, Vicente14, LVN14, GA15, 
GA17} and  the operational 
framework of the  standard notion of contextuality as a resource has received much attention lately \cite{HGJKL15, GHHHHJKW14, ACTA17, ABM17}.
Nonetheless, a proper treatment for
 the generalized framework of prepare-and-measure experiments 
considered in Refs.~\cite{RS05,SSW17,MPKRS16,RK17} as a resource is still missing. Here, using the novel 
generalized-noncontextual polytope, an efficient linear 
programming~\cite{LexSchrivjer99} characterization for the contextual set 
of prepared-and-measured statistics presented in Ref.~\cite{SSW17}, we 
% letting even more clear the scenario of contextuality with fixed sets of 
%operational equivalences~\cite{RS05,RK17}, one may see the 
%main aim of the present work not only as being an overcoming for that lack 
%of
present a mathematically well structured resource-theoretic approach for 
generalized contextuality based on a physically motivated set of free operations with an explicit parametrization.
We then adapt known resource quantifiers for contextuality and nonlocality \cite{Barrett05b, Allcock09, GWAN12,
Joshi13, Vicente14, LVN14,  GHHHHJKW14,
GA15, HGJKL15,  GA17, ACTA17, BAC17, ABM17} to obtain natural monotones for 
generalized contextuality in arbitrary prepare-and-measure experiments.

This work is organized as follows: in Sec.~\ref{sec: GenContextuality} we 
review the definition of generalized non-contextuality and 
the linear programming characterization of the noncontextual set; in 
Sec.~\ref{sec:RTofContextuality} we introduce the three important 
components of the resource theory: in Subsec.~\ref{subsec:Objects} we 
define the objects of the theory, in Subsec.~\ref{subsec:FreeOperations} we 
provide a set of physically motivated free operations for generalized 
contextuality in prepare-and-measure experiments, and in 
Subsec.~\ref{subsec:Quantifiers} we list several contextuality
quantifiers and we explicitly prove that they are monotones with respect to the set of free operations defined in Subsec.~\ref{subsec:FreeOperations}; we finish with
discussion and open questions in Sec.~\ref{sec:conc}.

\section{Generalized Contextuality}
\label{sec: GenContextuality}

\subsection{A glimpse on the theory}
\label{sec:glimpse}
We consider a prepare-and-measure experiment with a set of possible  
preparations $\mathcal{P}=\left\{P_1,P_2, \cdots , P_I\right\}$, 
a set of possible measurements $\mathcal{M}=\left\{M_1,M_2, \cdots , 
M_J\right\}$, each measurement with possible outcomes
$\mathcal{D}=\left\{d_1, d_2, \ldots, d_K\right\}$.  An operational  
probabilistic theory that describe 
this prepare-and-measure experiment 
specifies, for  each measurement $M_j$, a probability distribution $p(k\vert j,i)$ over $\mathcal{D}$
wich specifies the probability of obtaining outcome $d_k$ when performing measurement $M_j$,
conditioned on the preparation $P_i$. We denote the measurement event of measuring $M_j$ and obtaining outcome $d_k$ as
$k \vert j$.

\begin{definition}
 Two preparations $P_i$ and $P_{i'}$ are \emph{operationally equivalent} if
 \beq p(k\vert j,i)= p(k\vert j,i') \ \forall d_k\in \mathcal{D}, M_j \in \mathcal{M}.\eeq
\end{definition}

In other words, $P_i$ and $P_{i'}$ are said to be operationally equivalent if they give the same statistics for every measurement.
Operational equivalence between $P_i$ and $P_{i'}$ will be denoted by
$P_i \simeq P_{i'}$.

\begin{definition}
 Two measurement events $k\vert j$ and $k'\vert j'$ are \emph{operationally equivalent} if  
 \beq p(k\vert j,i)= p(k'\vert j',i) \ \forall P_i \in \mathcal{P}.\eeq
\end{definition}

In other words, $k\vert j$ and $k'\vert j'$ are said to be operationally 
equivalent whenever they have the same statistics for every preparation 
in $\mathcal{P}$. 
Operational equivalence between $k\vert j$ and $k' \vert j'$ will be denoted by
$k\vert j \simeq k'\vert j'$.

We then specify a set  $\mathcal{E}_P$ of operational equivalences for the preparations
\beq \sum_i \alpha_i^sP_i \simeq \sum_i \beta_i^sP_i, \ s=1, \ldots , \left|\mathcal{E}_P\right|\eeq
where $\sum_i \alpha_i^sP_i$ and $\sum_i \beta_i^sP_i$ represent convex combinations of the preparations $P_i$, and 
a set $\mathcal{E}_M$ of operational equivalences for the measurement effects
\beq \sum_{k,j} \alpha_{k\vert j}^r\left[k\vert j\right] \simeq \sum_{k,j} \beta_{k\vert j}^r\left[k\vert j\right],
\ r=1, \ldots , \left|\mathcal{E}_M\right|\eeq
where $\sum_{k,j} \alpha_{k\vert j}^r\left[k\vert j\right]$ and $\sum_{k,j} \beta_{k\vert j}^r\left[k\vert j\right]$ represent convex combinations of measurement events.

\begin{definition}
 A \emph{prepare-and measure scenario} \beq\mathcal{S}\coloneqq\left\{\mathcal{P}, \mathcal{M}, \mathcal{D}, \mathcal{E}_P, \mathcal{E}_M\right\}\eeq consists of a set of preparations
 $\mathcal{P}$, a set of measurements $\mathcal{M}$, a set of outcomes $\mathcal{D}$, a set of operational equivalences for 
 the preparations  $\mathcal{E}_P$ and a set of operational equivalences for the measurements $\mathcal{E}_M$.
A prepare-and-measure statistics (more 
commonly known as behaviours or 
black-box correlations~\cite{BCPSW13,NGHA15,Slofstra17}) is a set of conditional probability distributions
\begin{equation}
\label{def:behaviour}
\boldsymbol{B} \coloneqq \left\{p\left(k \vert 
j,i\right)\right\}_{j \in [J], i \in [I], k \in [K]}
\end{equation}
that give the probability of outcome $d_k$ for each measurement $M_j$ 
given the preparation $P_i$.
\end{definition}

A schematic representation of a prepare-and-measure scenario is shown in Fig. \ref{Figure_prepare_and_measure}.

\begin{figure}[]
\includegraphics[scale=0.1]{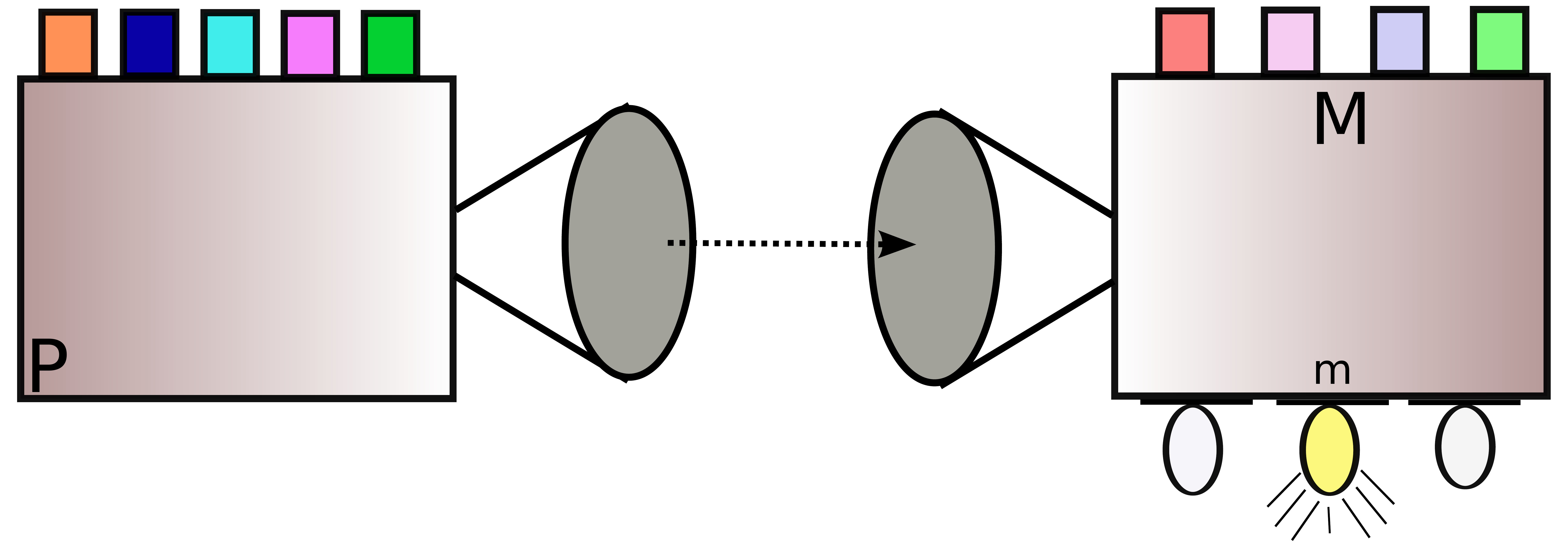}
 \caption{\footnotesize{Schematic drawing of a prepare-and-measure 
scenario $\mc{S}$. Each button out of $\vert I \vert$ above the box 
labeled with a $P$ represents
a possible preparation outputted for the box. Analogously each button out 
of $\vert J \vert$ presented above the box labeled with an $M$ represent 
a choice of measurement. The lamp bulbs below that box mean a possible 
outcome for each chosen and pressed measurement button. Together with 
these boxes  are given operational equivalences $\mc{E}_{M}$ and 
$\mc{E}_{P}$. It is the whole structure $\mc{S}=\{\{P_i\}_{i}, 
\{M_j\}_{j}, \{m_k\}_{k},\mc{E}_{M},\mc{E}_{P} \}$ which we call a 
prepare-and-measure scenario.} \label{Figure_prepare_and_measure}}
\end{figure}

\subsubsection{Ontological models}

\begin{definition}
 An \emph{ontological model} for  a prepare-and-measure statistics $B=\left\{p\left(k\vert j,i\right)\right\}$ is a specification of a set of \emph{ontic states} $\Lambda$, for each preparation $P_i$ a probability space
 $\left(\Lambda, \Sigma, \mu_i\right)$ and for each $\lambda \in \Lambda$ 
and each $M_j \in \mathcal{M}$ a probability distribution  
$\left\{\xi_{k\vert j}\left(\lambda\right)\right\}$ over 
 $\mathcal{D}$, such that  \beq  p\left(k\vert j,i\right) = \int_{ 
\Lambda} \xi_{k\vert 
j}\left(\lambda\right)\mu_i\left(\lambda\right).\label{eq:model}\eeq
\end{definition}

%The set of ontic states $\Lambda$ can in principle be of infinite cardinality, but it was shown in Ref. \cite{} that if a noncontextual
\begin{figure}[]
\includegraphics[scale=0.15]{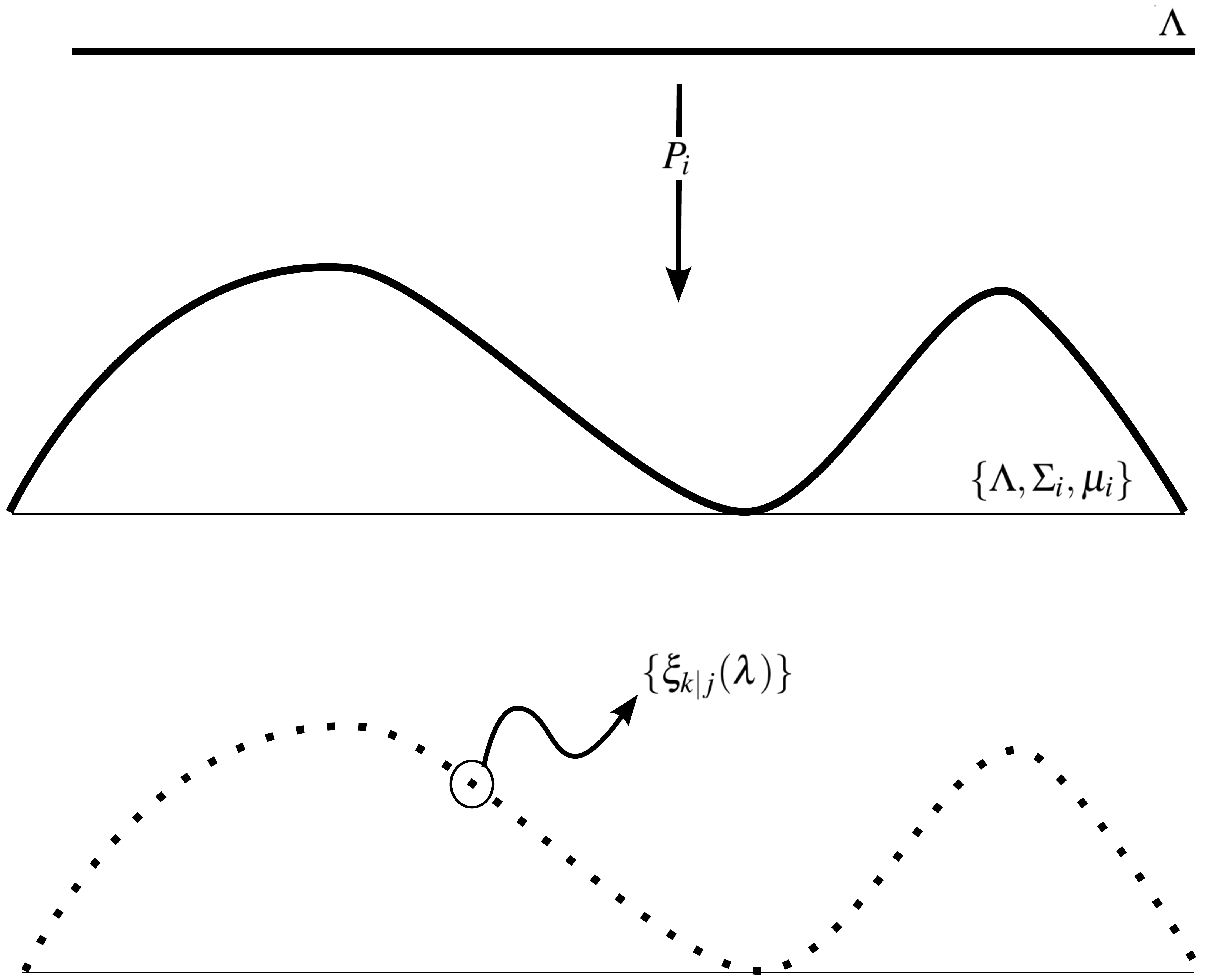}
 \caption{\footnotesize{Schematic drawing of the 
geometrical/probabilistic meaning of an ontological model as proposed by 
the authors in Refs.~\cite{MPKRS16,RS05}. Each preparation $P_i$ 
determines a probability space $(\Lambda,\Sigma,\mu_{P_i})$ whose 
underlying set is $\Lambda$, with $\Sigma$ and $\mu_{P_{i}}$ being a 
$\sigma$-algebra in $\mc{P}(\Lambda)$ and a probability measure over 
$\Lambda$, respectively. Roughly speaking, 
it means that some regions on $\Lambda$ are more likely than others. Now, 
for each set of measurement events $\{m_{k} \vert M_{j}\}$, and for each 
ontological state $\lambda \in \Lambda$ there is associate with them a 
collection of response functions $\{\xi_{1\vert M_1}(\lambda),\xi_{2\vert 
M_1}(\lambda), ..., \xi_{d\vert M_1}(\lambda),...,\xi_{d\vert 
M_{\vert J \vert}}(\lambda)\}$, determining the output of the measurement 
procedure when it is described by the ontological state $\lambda$.} 
\label{Figure_geometrical_meaning}}
\end{figure}

The interpretation of an ontological model is shown in Fig. 
\ref{Figure_geometrical_meaning}.
The ontic state $\lambda$ is understood as a variable that describes the behavior of the system that may not be accessible experimentally. 
If preparation $P_i$ is implemented, the 
ontic state $\lambda$ is sampled according to the associated probability 
distribution $\mu_{i}$. On the other hand, given $\lambda$, for every 
measurement $M_j$ the outcome $d_k$ is  a probabilistic function of 
$\lambda$, described by the response functions $\xi_{k\vert 
j}\left(\lambda\right)$. The variable $\lambda$ mediates the correlations 
between measurements and preparations. 
From the perspective of causal 
models \cite{Pearl00}, Eq. \eqref{eq:model} implies that the prepare-and-measure 
statistics is consistent with the causal structure shown in Fig. 
\ref{fig:causal}.

\begin{figure}[h]
 \definecolor{bcduew}{rgb}{0.7372549019607844,0.8313725490196079,0.9019607843137255}
 \begin{tikzpicture}[line cap=round,line join=round,>=triangle 45,x=1cm,y=1cm]
 \draw [->,line width=1pt] (4,1) -- (4,-0.65);
 \draw [->,line width=1pt] (4,-1) -- (6.65,-1);
 \draw [->,line width=1pt] (7,1) -- (7,-0.65);
 \draw [fill=bcduew] (4,1) circle (10pt);
 \draw[color=black] (4,1) node {$i$};
 \draw [fill=bcduew] (7,1) circle (10pt);
 \draw[color=black] (7,1) node {$j$};
 \draw [fill=bcduew] (4,-1) circle (10pt);
 \draw[color=black] (4,-1) node {$\lambda$};
 \draw [fill=bcduew] (7,-1) circle (10pt);
 \draw[color=black] (7,-1) node {$k$};
\end{tikzpicture}
\caption{Causal structure of an ontological model for a prepare-and-measure experiment. Given a preparation $i$, the ontic state $\lambda$ is sampled according to
$\mu_i$. then, for a choice of measurement $j$, the value of $k$ is sampled according to $\xi_{k\vert j}\left(\lambda\right)$. Notice that the variable $\lambda$ mediates the correlations 
between measurements and preparations. }
\label{fig:causal}
\end{figure}
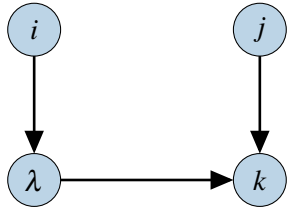

\subsubsection{Noncontextual models}

The generalized notion of noncontextuality introduced in Ref. \cite{RS05} requires that preparations and measurement events that
can not be distinguished operationally are identically represented in the model.
This implies that  the operational equivalences valid for $\mathcal{P}$ and $\mathcal{M}$ should also be valid for the functions 
$\mu_i$ and $\xi_{k\vert j}$, respectively. In terms of the previous definitions, we have:

\begin{definition}[Noncontextual ontological models]
  An ontological model satisfies \emph{preparation noncontextuality} if $\mu_{i}=\mu_{i'}$ whenever $P_i$ and $P_{i'}$ are operationally
  equivalent. 
 An ontological model satisfies \emph{measurement noncontextuality} if 
$\xi_{k\vert j}=\xi_{k' \vert j'}$ whenever $k\vert j$ and $k'\vert j'$
 are operationally equivalent. 
 An ontological model is \emph{universally noncontextual}, or simply \emph{noncontextual}, if it satisfies both preparation and measurement noncontextuality.
\end{definition}

The non-existence of a noncontextual ontological model for the prepare-and-measure statistics $B$ can be interpreted as
signature of the nonclassicality of $B$. It is a known fact that some
prepare-and-measure statistics obtained with quantum systems do not admit a noncontextual ontological model \cite{RS05}.

\begin{definition}
 The prepare-and-measure statistics $B$ is called \emph{noncontextual} if it has an noncontextual ontological model. The set of all noncontextual prepare-and-measure statistics for the scenario $\mathcal{S}$
 will be denoted by 
 $\mathsf{NC}\left(\mathcal{S}\right).$
\end{definition}

\subsection{Linear Characterization}
\label{sec:linear_characterization}

It was shown in Ref. \cite{SSW17} that if a prepare-and-measure statistics $B$ has a  noncontextual ontological model, 
then it also has a noncontextual  ontological model with an ontic state space $\Lambda$ of
finite cardinality. This implies that membership in $\mathsf{NC}\left(\mathcal{S}\right)$ can be formulated in terms of linear programming.

 Given a ontic state $\lambda$, the value of the response functions $\xi_{k\vert j}$ can be represented in a vector 
 \beq \boldsymbol{\xi}\left(\lambda\right)\coloneqq \left(\xi_{1\vert 1}\left(\lambda\right), \ldots,
 \xi_{K\vert 1}\left(\lambda\right), \ldots \xi_{1\vert J}\left(\lambda\right), \ldots, \xi_{K\vert J}\left(\lambda\right)
 \right)\eeq

\begin{definition}
 For fixed $\lambda \in \Lambda$, the vectors $\boldsymbol{\xi}\left(\lambda\right)$ defined by different choices of response functions $\xi_{m\vert M}$ satisfying 
 measurement noncontextuality are called  \emph{noncontextual measurement 
assignments}. The set of all noncontextual measurement assignments is a 
called the \emph{noncontextual measurement-assignment polytope}.
\end{definition}

As shown in Ref. \cite{SSW17}, fixed $\lambda \in \Lambda$, the set of all noncontextual measurement 
assignments is indeed a polytope since it is characterized by the 
following linear restrictions:
\begin{align}
 \xi_{k\vert j}\left(\lambda\right)&\geq 0\\
 \sum_k \xi_{k\vert j}\left(\lambda\right)&=1\\
 \sum_{k,j}\left(\alpha_{k\vert j}^r -\beta_{k\vert j}^r\right)\xi_{k\vert 
j}\left(\lambda\right)&=0.
\end{align}
Notice that since  these constraints do not depend on $\lambda$, the noncontextual measurement-assignment 
polytope is the same for every $\lambda$. We denote by 
$\tilde{\boldsymbol{\xi}}\left(\kappa\right)$ the 
extremal points of this polytope, with $\kappa$  a discrete variable ranging over some enumeration of these extremal points.

\begin{proposition}
A prepare-and-measure statistics $B=\left\{p\left(k\vert 
j,i\right)\right\}$ in the scenario $\mathcal{S}$ has a noncontextual 
ontological model if, and only if, there is a set of probability 
distributions $\left\{\mu_i\left(\kappa\right)\right\}$
over $\kappa$ such that 
\begin{align}
\sum_i\left(\alpha^s_i - \beta^s_i\right)\mu_i\left(\kappa\right)&=0     \\
\sum_{\kappa} \tilde{\xi}_{k\vert j}\left(\kappa\right)\mu_i\left(\kappa\right)&=p\left(k\vert j,i\right),
\end{align}
where $\kappa$ ranges over the discrete set of vertices of the measurement-assignment polytope.

\end{proposition}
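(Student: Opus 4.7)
The strategy is to reduce the general (continuous) ontological model from Definition to a finite representation built from the extremal noncontextual measurement assignments $\tilde{\boldsymbol{\xi}}(\kappa)$, and to express the preparation distributions as probabilities over $\kappa$.

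For the ``only if'' direction, I would start by invoking the result of Ref.~\cite{SSW17} quoted above, so that without loss of generality the noncontextual ontological model realising $B$ may be taken to have a finite ontic state space $\Lambda$. Measurement noncontextuality means that, for every fixed $\lambda\in\Lambda$, the vector $\boldsymbol{\xi}(\lambda)$ satisfies exactly the linear constraints defining the noncontextual measurement-assignment polytope (non-negativity, normalization and the $\mc{E}_M$ equivalences). Hence for each $\lambda$ there exists a probability distribution $q_\lambda(\kappa)$ over the vertices such that
\beq
\xi_{k\vert j}(\lambda)=\sum_\kappa q_\lambda(\kappa)\,\tilde{\xi}_{k\vert j}(\kappa).
\eeq
Then I would define $\mu_i(\kappa)\coloneqq\sum_\lambda \mu_i(\lambda)\,q_\lambda(\kappa)$, which is manifestly a probability distribution in $\kappa$ for each $i$. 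Substituting into Eq.~\eqref{eq:model} and swapping the order of summation immediately yields $\sum_\kappa \tilde{\xi}_{k\vert j}(\kappa)\,\mu_i(\kappa)=p(k\vert j,i)$, establishing the second condition. For the first condition, preparation noncontextuality in its standard form with operational equivalences in $\mc{E}_P$ reads $\sum_i\alpha_i^s\mu_i(\lambda)=\sum_i\beta_i^s\mu_i(\lambda)$ for every $\lambda$; combining this with the definition of $\mu_i(\kappa)$ gives $\sum_i(\alpha_i^s-\beta_i^s)\mu_i(\kappa)=\sum_\lambda\bigl(\sum_i(\alpha_i^s-\beta_i^s)\mu_i(\lambda)\bigr)q_\lambda(\kappa)=0$, as required.

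For the converse, given $\{\mu_i(\kappa)\}$ satisfying the two linear conditions, I would simply exhibit the ontological model with finite ontic state space $\Lambda=\{\kappa\}$, preparation measures $\mu_i$ as given, and response functions $\xi_{k\vert j}(\kappa)\coloneqq\tilde{\xi}_{k\vert j}(\kappa)$. The reproducibility of $B$ is precisely the second condition; measurement noncontextuality is automatic because each $\tilde{\boldsymbol{\xi}}(\kappa)$ is a vertex of the noncontextual measurement-assignment polytope, hence satisfies every equivalence in $\mc{E}_M$; and the first condition is exactly preparation noncontextuality expressed at the level of the ontic measures.

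The main conceptual obstacle is the forward direction: one must (i) legitimately pass from a potentially continuous $\Lambda$ to a finite one, which is covered by the cited theorem from \cite{SSW17}, and (ii) verify that the coarse-grained distributions $\mu_i(\kappa)$ inherit the preparation equivalences. The latter is not entirely trivial because preparation noncontextuality is a statement about the original measures $\mu_i(\lambda)$, and one has to check that the extra marginalisation with $q_\lambda(\kappa)$ does not destroy the linear relations in $\mc{E}_P$. Linearity of the relations in $\mu_i$ is what makes this step go through, and isolating that point is the key content of the argument.
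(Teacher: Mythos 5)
Your proposal is correct and is essentially the intended argument: the paper itself states this proposition without proof, deferring to Ref.~\cite{SSW17}, and your reconstruction (reduce to a finite ontic space, decompose each noncontextual measurement assignment $\boldsymbol{\xi}(\lambda)$ over the vertices $\tilde{\boldsymbol{\xi}}(\kappa)$ with weights $q_\lambda(\kappa)$, coarse-grain the $\mu_i$ accordingly, and for the converse take $\Lambda$ to be the vertex set itself) is exactly the standard proof from that reference. Both directions check out, including the point you rightly flag that linearity of the $\mc{E}_P$ constraints in $\mu_i$ is what lets the preparation equivalences survive the marginalisation over $q_\lambda(\kappa)$.
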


This proposition implies that membership in $\mathsf{NC}\left(\mathcal{S}\right)$ can be efficiently tested using linear programming, which in 
turn implies that some of the quantifiers proposed in Sec.\ref{subsec:Quantifiers} can also be computed efficiently using linear programming.

\section{Resource Theory of Generalized Contextuality}
\label{sec:RTofContextuality}

\subsection{Objects}
\label{subsec:Objects}

We define the set $\mc{O}$ of \emph{objects} as 
the collection of all allowed prepare-and-measure statistics:
\begin{equation}
\boldsymbol{B} = \left\{p\left(k \vert 
j,i\right)\right\}_{j \in J, i \in I, k \in K}
\end{equation}
for a prepare-and-measure scenario $\mathcal{S}$. The \emph{free objects}, or 
\emph{resourceles} ones, correspond to those 
behaviors $\boldsymbol{B}\in \mathsf{NC}\left(\mathcal{S}\right)$ with a universally 
noncontextual model.

Given two objects $\boldsymbol{B}_{1}$ and $\boldsymbol{B}_{2}$ we also allow for
a combination of them in order to obtain a third new object, denoted by 
$\boldsymbol{B}_{1} \otimes \boldsymbol{B}_{2}$. One may think of this such an object 
$\boldsymbol{B}_{1} \otimes \boldsymbol{B}_{2}$ as representing full access to both 
$\boldsymbol{B}_{1}$ and $\boldsymbol{B}_{2}$ together and at the same time. For our 
purposes it will be enough to consider that combination as the juxtaposition of two 
independent behaviours:

\begin{definition}
 Given two behavior  
$B_1$ and $B_2$ (not necessarily in the same scenario), the juxtaposition of $B_1$ and 
$B_2$, denoted by $B_1 \otimes B_2$,
is the behavior obtained by independently choosing  preparation and measurement  for 
$B_1$ and $B_2$. That is,
the preparations in $B_1\otimes B_2$ correspond to a pair of preparations, $i_1$ for 
$B_1$ and 
$i_2$ for $B_2$ and analogously for the measurements. The corresponding probability 
distributions are given by
\beq p\left(k_1k_2\vert j_1j_2, i_1i_2\right)=p\left(k_1\vert j_1, 
i_1\right)p\left(k_2\vert j_2, i_2\right).\eeq
\end{definition}

As expected, the juxtaposition of two noncontextual behaviors is a noncontextual 
behavior. 

\begin{theorem}
If $B_1$ and $B_2$ are noncontextual if and only if $B_1 \otimes B_2$ is noncontextual.
\end{theorem}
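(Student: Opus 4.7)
The plan is to prove both implications by moving between ontological models of the individual scenarios and an ontological model of the juxtaposed scenario via a product construction.

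For the forward direction, suppose $B_1$ and $B_2$ admit noncontextual ontological models $(\Lambda_t, \mu_i^{(t)}, \xi_{k\vert j}^{(t)})$ for $t=1,2$. I would build a model for $B_1\otimes B_2$ on the product space $\Lambda=\Lambda_1\times\Lambda_2$, with $\mu_{i_1 i_2}(\lambda_1,\lambda_2)\coloneqq \mu^{(1)}_{i_1}(\lambda_1)\mu^{(2)}_{i_2}(\lambda_2)$ and $\xi_{k_1 k_2\vert j_1 j_2}(\lambda_1,\lambda_2)\coloneqq \xi^{(1)}_{k_1\vert j_1}(\lambda_1)\xi^{(2)}_{k_2\vert j_2}(\lambda_2)$. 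A direct integration reproduces the product formula for $p(k_1 k_2\vert j_1 j_2,i_1 i_2)$ given in the definition of juxtaposition. It then remains to verify that the two noncontextuality conditions are preserved: any operational equivalence in the joint scenario is inherited by tensoring an operational equivalence of one factor with an arbitrary fixed preparation (resp.\ measurement event) of the other, and under the product structure this immediately propagates to equalities of the corresponding $\mu$'s and $\xi$'s.

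For the converse, I would argue by marginalization. If $B_1\otimes B_2$ admits a noncontextual ontological model $(\Lambda,\mu_{i_1 i_2},\xi_{k_1 k_2\vert j_1 j_2})$, fix a reference preparation $i_2^\star$ and a reference measurement $j_2^\star$ in scenario $\mathcal{S}_2$, and define $\tilde\mu_{i_1}\coloneqq \mu_{i_1 i_2^\star}$ and $\tilde\xi_{k_1\vert j_1}(\lambda)\coloneqq \sum_{k_2}\xi_{k_1 k_2\vert j_1 j_2^\star}(\lambda)$. The response functions sum to one and give back the correct marginals of $B_1$, and operational equivalences valid in $\mathcal{S}_1$ lift to operational equivalences in the joint scenario (again by tensoring with the reference data in $\mathcal{S}_2$), so the noncontextuality of the joint model forces $\tilde\mu$ and $\tilde\xi$ to respect the operational equivalences of $\mathcal{S}_1$. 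The same argument with indices swapped recovers a noncontextual model for $B_2$.

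The main obstacle is the bookkeeping of operational equivalences for the juxtaposed scenario, which the text defines only at the level of behaviours. Concretely, one needs the (natural) identification of $\mathcal{E}_P$ and $\mathcal{E}_M$ for $\mathcal{S}_1\otimes\mathcal{S}_2$ as the equivalences generated by tensoring an equivalence on one side with a fixed preparation or measurement event on the other; once this is made explicit, both directions become a routine verification that the product and marginalization constructions preserve these tensored equivalences at the ontological level. Everything else reduces to substituting the product ansatz into Eq.~\eqref{eq:model} and checking normalization.
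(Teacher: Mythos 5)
Your proof follows the same route as the paper's: a product ontological model on $\Lambda_1\times\Lambda_2$ for the forward direction and marginalization for the converse. You are in fact more careful than the paper, which does not spell out how the operational equivalences of the juxtaposed scenario are generated or why the noncontextuality conditions are preserved under the product and marginalization constructions; your explicit treatment of that bookkeeping fills a genuine gap in the paper's terse argument.
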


\begin{proof}
 Let $\left(\Lambda_1, \Sigma_1, \mu_{i_1}\right)$ and $\left\{\xi_{k_1\vert j_1} 
\right\}$ be an ontological model for $B_1$ and 
 $\left(\Lambda_2, \Sigma_2, \mu_{i_2}\right)$ and $\left\{\xi_{k_2\vert j_2}\right\}$ 
be an ontological model for $B_2$.
 Then $\left(\Lambda_1 \times \Lambda_2, \Sigma_1 \times \Sigma_2, \mu_{i_1}\times 
\mu_{i_2}\right)$ and $\left\{\xi_{k_1\vert j_1}\times \xi_{k_2\vert j_2} \right\}$
 is an ontological model for $B_1 \otimes B_2$. Conversely, if an ontological model for 
$B_1 \otimes B_2$ is given, an ontological model for 
 $B_1$ can be obtained by marginalizing over $B_2$ and an ontological model for 
 $B_2$ can be obtained by marginalizing over $B_1$.
\end{proof}

\subsection{Free Operations}
\label{subsec:FreeOperations}

Given a prepare-and-measure scenario $\mc{S}$, we define the 
set $\mc{F}$ of free operations in analogy with simulation 
of communication channels~\cite{TF17}: the image 
$\boldsymbol{\tilde{B}}=T(\boldsymbol{B})$ of $B$ through every mapping $T:\mc{O} 
\longrightarrow \mc{O}$ in $\mc{F}$ should be viewed as a simulation of a
new scenario using one preprocessing for the preparation box $\mathcal{P}$, another 
preprocessing for the measurement box $\mathcal{M}$, and for the last a 
post-processing for the outcomes $\mathcal{D}$ of each measurement $M_j$ (see 
Fig.\ref{Figure_free_operations_boxes}). More formally, each free 
operation is a map:
\begin{align}
\label{def:free_operations}
T: \mc{O} &\longrightarrow \mc{O}  \\
          \{p(k \vert j,i)\} &\mapsto \{p({\tilde{k}} \vert 
\tilde{j},\tilde{i}) \} \nonumber
\end{align}
where for each $\td{k} \in \td{K},\td{i} \in \td{I},\td{j} \in \td{J}$:
\begin{equation}
\label{def:free_operations2}
 \sum_{k,j,i}q_{O}({\tilde{k}} \vert {k})p(k \vert j,i)q_{P}(i 
\vert {\td{i}})q_{M}(j \vert 
{\td{j}}),
\end{equation}
with $q_{P}:\tilde{I} \longrightarrow I$, 
$q_{M}:\tilde{J} \longrightarrow J$, and $q_O: 
K \longrightarrow \tilde{K}$ stochastic maps~\cite{NC00} from certain 
input alphabets to another sets of output alphabets. In what follows, the stochastic map $q_O$ can also depend on the measurement $\tilde{j}$, that is, different  post-processing of the outcomes can be applied to different measurements.
Hence, it would be more appropriate to write $q^{\tilde{j}}_O$, but we avoid the use of this heavy notation.
% We should remark two facts. Firstly, a notational issue, for sake of 
% simplicity, throughout the paper we destine tilde-marked-letters for 
% meaning the transformed prepare-and-measured scenario obtained as image of 
% a free operation $T$ in $\mc{F}$ (see 
% Fig.\ref{Figure_free_operations_boxes}).Even more, often we  abuse of 
% that notation and instead of writing a capital letter appended with a 
% subscript carrying a tilde, we will simply write the subscript with the 
% tilde, \emph{i.e.} $M_{\td{j}}$ turns merely  
% $\td{j}$. Indeed, we will proceed analogously for the plain letters as 
% well. In this sense, Eq.~\eqref{def:free_operations2} becomes:
% \begin{equation}
% \label{def:eq_free_operations2}
%  \sum_{k,j,i}q_{O}(\tilde{k} \vert k)p(k \vert j,i)q_{P}(i 
% \vert \td{i} \,\, )q_{M} (j \vert \td{j} \,\,).
% \end{equation}
% Second, and 
% more important,
Eq.~\eqref{def:free_operations2} shows that, after a 
suitable relabeling of the indexes, the overall effect of each free 
operation is a right-multiplication of a stochastic matrix and a 
left-multiplication of another stochastic matrix on prepare-and-measure 
statistics $\{p(k \vert j,i)\}$, thus each $T$ in $\mc{F}$ acts as a 
linear map on the set of objects. We have proved, therefore, the following 
results:
\begin{figure}[]
\includegraphics[scale=0.1]{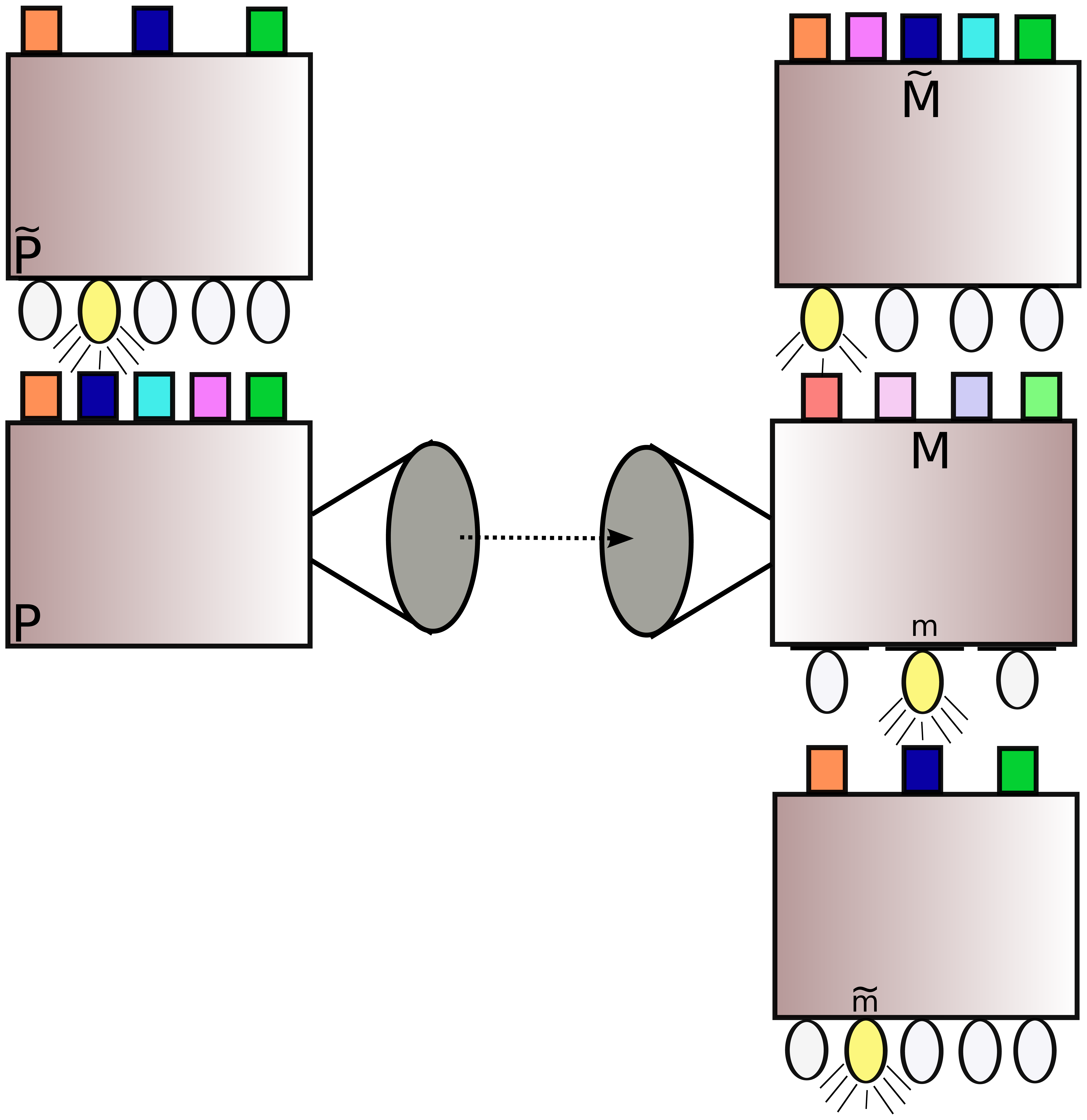}
 \caption{\footnotesize{Schematic drawing of general free operation $T$ in 
$\mc{F}$. The image of $T$ on a particular scheme $\boldsymbol{B}$ is 
viewed as a simulation of a new scenario $\boldsymbol{\tilde{B}}$, 
through two preprocessing and one post-processing. In the communication 
theory parlance, we are applying two steps of encoding, one 
$enc_{P}:\tilde{I} \longrightarrow I$ in the preparation step and another 
$enc_{M}:\tilde{J} \longrightarrow J$ in the measurement step, followed by 
an extra step of decoding $dec: [d] \longrightarrow [\tilde{d}]$. The net 
result of such a procedure being formally described by 
$\boldsymbol{\tilde{B}}=dec \circ \boldsymbol{B} \circ (enc_{P} \times 
enc_{M})$.} \label{Figure_free_operations_boxes}}
\end{figure} 
\begin{lemma}
\label{lemma:T_is_convex}
Let $\mc{F}$ be the above defined set of free operations. Given 
$B,B^{\prime}$ two 
prepare-and-measure statistics in $\mc{O}$, and given $\pi \in [0,1]$, one 
has:
\beq
\label{lemma:eq_T_is_convex}
T\left(\pi B+\left(1-\pi\right)B^{\prime}\right)=\pi T(B)+\left(1-\pi\right)T(B^{\prime}),
\eeq
where the sum and multiplication on $\mc{O}$ are defined component-wise. 
\end{lemma}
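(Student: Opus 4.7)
The plan is to unfold the definition of $T$ from Eq.~\eqref{def:free_operations2} and exploit the fact that the stochastic maps $q_O$, $q_P$, $q_M$ do not depend on the behavior being acted upon. Since $T$ is defined as a sum of products in which the behavior $\{p(k\vert j,i)\}$ enters exactly once as a linear factor (with the other three factors fixed by the choice of free operation), the entire expression is manifestly linear in $p$. The convex-combination identity should therefore reduce to a one-line application of distributivity of multiplication over addition, applied componentwise.

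Concretely, I would write $B=\{p(k\vert j,i)\}$ and $B'=\{p'(k\vert j,i)\}$ and set $B''=\pi B+(1-\pi)B'$, so that by componentwise definition the entries of $B''$ are $p''(k\vert j,i)=\pi p(k\vert j,i)+(1-\pi) p'(k\vert j,i)$. Evaluating the $(\tilde k\vert \tilde j,\tilde i)$-entry of $T(B'')$ from Eq.~\eqref{def:free_operations2} gives
\beq
\sum_{k,j,i} q_O(\tilde k\vert k)\bigl[\pi p(k\vert j,i)+(1-\pi) p'(k\vert j,i)\bigr] q_P(i\vert \tilde i)\, q_M(j\vert \tilde j).
\eeq
Pulling the scalars $\pi$ and $1-\pi$ out of the sum and splitting into two sums then matches exactly the $(\tilde k\vert \tilde j,\tilde i)$-entries of $\pi T(B)$ and $(1-\pi) T(B')$. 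Since this holds for every choice of indices $(\tilde k,\tilde j,\tilde i)$, the componentwise equality in Eq.~\eqref{lemma:eq_T_is_convex} follows.

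There is essentially no obstacle: the argument is a direct consequence of the fact that the free operations act by left- and right-multiplication by fixed stochastic matrices (together with the output post-processing), an observation the authors already flagged in the paragraph preceding the lemma. The only minor point worth flagging in the write-up is that the sum and scalar multiplication on $\mc{O}$ are to be understood componentwise in the indices $(k,j,i)$, so the linearity of $T$ as a map on the vector space generated by $\mc{O}$ is equivalent to the componentwise identity just displayed. In fact the same calculation shows full linearity, not just convex-linearity, so the restriction $\pi\in[0,1]$ is only needed to ensure that both sides remain inside the set of valid behaviors.
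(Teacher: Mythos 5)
Your proposal is correct and matches the paper's own justification: the lemma is stated immediately after the observation that Eq.~\eqref{def:free_operations2} exhibits $T$ as left- and right-multiplication by fixed stochastic matrices, so the behavior enters linearly and the convex-combination identity follows componentwise by distributivity, exactly as you argue. Your closing remark that full linearity holds and that $\pi\in[0,1]$ only serves to keep both sides in $\mc{O}$ is a correct and slightly more explicit version of the same point.
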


\begin{lemma}
 The set $\mc{F}$ of free operations is closed under composition.
\end{lemma}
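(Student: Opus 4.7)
The plan is to prove closure under composition by an explicit computation. Take any two free operations $T_1, T_2 \in \mc{F}$, characterized via Eq.~\eqref{def:free_operations2} by the stochastic-map triples $(q_P^{(1)}, q_M^{(1)}, q_O^{(1)})$ and $(q_P^{(2)}, q_M^{(2)}, q_O^{(2)})$. My approach is to compute $(T_2 \circ T_1)(B)$ for an arbitrary behaviour $B \in \mc{O}$, reorder the nested summations, and read off a new triple $(q_P, q_M, q_O)$ that realizes the composition in exactly the form of Eq.~\eqref{def:free_operations2}.

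In detail, I would first apply $T_1$ to $B$ according to Eq.~\eqref{def:free_operations2}, then apply $T_2$ to the resulting behaviour, expanding both using the same defining formula. The preparation contribution factors cleanly: the summation over the intermediate label $\tilde{i}$ produces
\begin{equation*}
q_P(i \mid \tilde{\tilde{i}}) := \sum_{\tilde{i}} q_P^{(1)}(i \mid \tilde{i})\, q_P^{(2)}(\tilde{i} \mid \tilde{\tilde{i}}),
\end{equation*}
which is row-stochastic as the matrix product of two row-stochastic maps (non-negativity and the identity $\sum_i q_P(i \mid \tilde{\tilde{i}}) = 1$ follow directly from the same properties of the factors). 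The same argument defines $q_M(j \mid \tilde{\tilde{j}})$ by composing $q_M^{(1)}$ with $q_M^{(2)}$, and $q_O(\tilde{\tilde{k}} \mid k)$ by composing $q_O^{(2)}$ with $q_O^{(1)}$. Substituting these definitions back and matching term-by-term with the doubled expansion of $(T_2 \circ T_1)(B)$ then delivers exactly the form of Eq.~\eqref{def:free_operations2}, so $T_2 \circ T_1 \in \mc{F}$.

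The main obstacle is the dependence of $q_O$ on the intended measurement $\tilde{j}$ mentioned just after Eq.~\eqref{def:free_operations2}: in the composition, the intermediate label $\tilde{j}$ is summed over simultaneously in $q_O^{(1),\tilde{j}}$ and in $q_M^{(2)}(\tilde{j} \mid \tilde{\tilde{j}})$, so the outcome and measurement sums do not factorize into a product of independent stochastic maps. I would resolve this by treating the intermediate $\tilde{j}$ as an auxiliary random variable conditioned on $\tilde{\tilde{j}}$ and absorbing the resulting weighted average into the definition of the composed outcome post-processing,
\begin{equation*}
q_O^{\tilde{\tilde{j}}}(\tilde{\tilde{k}} \mid k) := \sum_{\tilde{j},\tilde{k}} q_O^{(2),\tilde{\tilde{j}}}(\tilde{\tilde{k}} \mid \tilde{k})\, q_O^{(1),\tilde{j}}(\tilde{k} \mid k)\, q_M^{(2)}(\tilde{j} \mid \tilde{\tilde{j}}),
\end{equation*}
and checking via the stochasticity of $q_M^{(2)}$, $q_O^{(1),\tilde{j}}$, and $q_O^{(2),\tilde{\tilde{j}}}$ that the result is a valid $\tilde{\tilde{j}}$-dependent stochastic map from $K$ to $\tilde{\tilde{K}}$. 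Once this bookkeeping is carried out, the composition lies back in the parametric family of Eq.~\eqref{def:free_operations2}, completing the proof.
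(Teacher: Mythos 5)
Your overall plan (expand $(T_2\circ T_1)(B)$, reorder the nested sums, and read off a composed triple of stochastic maps) is the natural one, and it is essentially what the paper has in mind when it states this lemma without proof as a consequence of the ``multiplication by stochastic matrices'' observation; for the preparation side, and for the measurement/outcome side when $q_O$ does not depend on the measurement label, your argument is complete and correct. However, the step you yourself flag as the main obstacle is exactly where the proof breaks, and your proposed repair does not close it. Writing out the composition, the factors involving the intermediate measurement label $\tilde j$ are
\begin{equation*}
F^{\tilde{\tilde j}}(\tilde{\tilde k}, j \mid k) \;=\; \sum_{\tilde j} q_M^{(2)}(\tilde j\mid \tilde{\tilde j})\, q_M^{(1)}(j\mid \tilde j)\left(\sum_{\tilde k} q_O^{(2),\tilde{\tilde j}}(\tilde{\tilde k}\mid \tilde k)\, q_O^{(1),\tilde j}(\tilde k\mid k)\right),
\end{equation*}
so the single sum over $\tilde j$ couples $q_M^{(1)}(j\mid\tilde j)$ to $q_O^{(1),\tilde j}$. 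Your definition of $q_O^{\tilde{\tilde j}}(\tilde{\tilde k}\mid k)$ marginalizes $\tilde j$ out of the outcome part alone and then multiplies by the separately marginalized $q_M(j\mid\tilde{\tilde j})=\sum_{\tilde j}q_M^{(1)}(j\mid\tilde j)\,q_M^{(2)}(\tilde j\mid\tilde{\tilde j})$; this replaces a joint average over $\tilde j$ by a product of two independent averages and does not reproduce $(T_2\circ T_1)(B)$. Concretely, take a single target measurement $\tilde{\tilde j}$, let $q_M^{(2)}(\cdot\mid\tilde{\tilde j})$ be uniform on $\tilde j\in\{1,2\}$, let $q_M^{(1)}(j\mid\tilde j)=\delta_{j\tilde j}$, let $q_O^{(2)}$ be the identity, and let $q_O^{(1),1}$ be the identity on a binary outcome while $q_O^{(1),2}$ is the bit flip. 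The true composed kernel satisfies $F(\tilde{\tilde k}=0, j=1\mid k=0)=1/2$, whereas any product form $q_O^{\tilde{\tilde j}}(0\mid 0)\,q_M(1\mid\tilde{\tilde j})$ consistent with the required marginals forces the value $1/4$; one checks on a behaviour with $p(0\mid 1,i)=1$, $p(0\mid 2,i)=0$ that the two maps genuinely disagree.

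The deeper issue is that $F^{\tilde{\tilde j}}(\tilde{\tilde k},j\mid k)$ is a $\tilde j$-indexed convex mixture of product kernels, which is generally not itself a product kernel, so no choice of $(q_M,q_O^{\tilde{\tilde j}})$ of the form in Eq.~\eqref{def:free_operations2} can represent the composition: the obstruction is not in your bookkeeping but in the statement, under the reading in which $q_O$ may depend on $\tilde j$. To obtain genuine closure you must either (a) restrict $\mc{F}$ to measurement-independent outcome post-processings, in which case your first computation already finishes the proof, or (b) enlarge the parametrization so that the post-processing may depend on the realized measurement $j$ (equivalently, allow shared randomness between $q_M$ and $q_O$); in class (b) one absorbs the coupling by conditioning on the Bayesian posterior $P(\tilde j\mid j,\tilde{\tilde j})=q_M^{(1)}(j\mid\tilde j)\,q_M^{(2)}(\tilde j\mid\tilde{\tilde j})/q_M(j\mid\tilde{\tilde j})$, and your ``absorb the weighted average'' idea then goes through verbatim. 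As written, though, your candidate $q_O^{\tilde{\tilde j}}$ does not realize $T_2\circ T_1$, so the proof has a genuine gap at its central step.
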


%Although we have been calling $\mc{F}$ as set of free operations,
It 
remains to show that the transformations belonging to $\mathcal{F}$ really 
fulfill the requirement of being  free operations, \emph{i.e.} we 
must  show that any element $T$ in $\mc{F}$ does not create a 
resource from a resourceless object -- it preserves the set of 
non-contextual prepare-and-measure statistics. More formally: 

\begin{theorem}
 Given a free operation $T \in \mc{F}$, and a \pandms 
$B=\{p(k \vert j,i)\}$, if $B$ admits a universally noncontextual
model, then $\tilde{B}=T(B)$ also admits a universally noncontextual model.
\end{theorem}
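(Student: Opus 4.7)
The plan is to construct an explicit universally noncontextual ontological model for $\tilde B = T(B)$ by lifting any noncontextual model of $B$ through the stochastic maps $q_P$, $q_M$, $q_O$ that define $T$.

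First, I invoke the hypothesis to fix a universally noncontextual model $(\Lambda,\{\mu_i\},\{\xi_{k\vert j}\})$ for $B$; by the linear characterization recalled in Sec.~\ref{sec:linear_characterization} one may even assume $|\Lambda|<\infty$. Then I define, on the same ontic space $\Lambda$, new epistemic states and response functions that absorb the pre- and post-processings into the ontological description:
\begin{align*}
\tilde{\mu}_{\tilde{i}}(\lambda) &\coloneqq \sum_{i} q_P(i\vert\tilde{i})\,\mu_i(\lambda),\\
\tilde{\xi}_{\tilde{k}\vert\tilde{j}}(\lambda) &\coloneqq \sum_{k,j} q_O(\tilde{k}\vert k)\,q_M(j\vert\tilde{j})\,\xi_{k\vert j}(\lambda).
\end{align*}
Stochasticity of $q_P$, $q_M$, and $q_O$ guarantees that each $\tilde\mu_{\tilde i}$ is a probability distribution on $\Lambda$ and that $\{\tilde\xi_{\tilde k\vert\tilde j}(\lambda)\}_{\tilde k}$ is a probability distribution on $\tilde K$ for every $\lambda$. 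Substituting these lifted data into the ontological-model formula \eqref{eq:model} and comparing with the free-operation formula \eqref{def:free_operations2} verifies that $\sum_{\lambda}\tilde\xi_{\tilde k\vert\tilde j}(\lambda)\,\tilde\mu_{\tilde i}(\lambda)=\tilde p(\tilde k\vert\tilde j,\tilde i)$ for every $\tilde k,\tilde j,\tilde i$, so that the lifted data indeed reproduces the statistics of $\tilde B$.

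The main obstacle is certifying that this lifted model is \emph{universally noncontextual} with respect to the operational equivalences of the new scenario $\tilde{\mc S}$. For preparations, I would argue that any equivalence $\sum_{\tilde i}(\alpha_{\tilde i}-\beta_{\tilde i})\tilde P_{\tilde i}\simeq 0$ in $\tilde{\mc S}$ is inherited, through the $q_P$-preprocessing that defines $\tilde P_{\tilde i}=\sum_i q_P(i\vert\tilde i)P_i$, from the equivalence $\sum_i c_i\,P_i\simeq 0$ in $\mc S$, with $c_i\coloneqq\sum_{\tilde i}(\alpha_{\tilde i}-\beta_{\tilde i})\,q_P(i\vert\tilde i)$; preparation noncontextuality of the original model then forces $\sum_i c_i\,\mu_i=0$, which rearranges precisely to $\sum_{\tilde i}(\alpha_{\tilde i}-\beta_{\tilde i})\,\tilde\mu_{\tilde i}=0$. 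A completely symmetric computation, this time using $q_M$ and $q_O$ on the measurement-event side and invoking measurement noncontextuality of the original model, establishes that the lifted $\tilde\xi$'s respect every operational equivalence in $\tilde{\mc E}_M$. Putting these three steps together yields the desired universally noncontextual ontological model for $\tilde B$, and hence $\tilde B\in\mathsf{NC}(\tilde{\mc S})$.
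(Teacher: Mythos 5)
Your construction is exactly the paper's: the same lifted distributions $\tilde\mu_{\tilde i}(\lambda)=\sum_i q_P(i\vert\tilde i)\mu_i(\lambda)$ and response functions $\tilde\xi_{\tilde k\vert\tilde j}(\lambda)=\sum_{k,j}q_O(\tilde k\vert k)q_M(j\vert\tilde j)\xi_{k\vert j}(\lambda)$, the same use of stochasticity for positivity and normalization, the same verification that the lift reproduces $T(B)$, and the same final step on operational equivalences. The only divergence is the direction of that last step: the paper pushes the original equivalences $\mathcal{E}_P,\mathcal{E}_M$ forward through $q_P,q_M,q_O$ to \emph{define} the equivalences of the transformed scenario and then checks the lifted model respects them, whereas you pull an arbitrary equivalence of the new scenario back to one among the $P_i$ and invoke noncontextuality of the original model --- which tacitly assumes the original model respects that pulled-back equivalence (i.e.\ that it lies in, or is generated by, $\mathcal{E}_P$), a point worth stating explicitly.
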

\begin{proof}
 Since $B$ admits a universally contextual model (w.r.t. to sets 
$\mc{E}_{M}$ and $\mc{E}_{P}$ of operational equivalences for measurements 
and preparations),  there exist a family of probability spaces 
$\{\Lambda,\Sigma_{i},\mu_{i}\}_{i}$ one for each preparation $P_i$, and a 
set of response functions $\{\xi_{k \vert j}(\lambda)\}$ such that:
\begin{align}
\label{eq:thm_T_preservesNC_noncontextualboxes_begin}
 \forall& \,\, \lambda, j, k: &\xi_{k \vert j}(\lambda) \geq 0;    \\
 \forall& \,\, \lambda,j: &\sum_{k \in K}\xi_{k \vert j}(\lambda)=1; \\
 \forall& \,\, \lambda,r: &\sum_{j \in J}\sum_{k \in 
K}\left(\alpha^{r}_{k \vert j} - \beta^{r}_{k \vert j}\right)\xi_{k \vert 
j}(\lambda)=0; \\
 \forall& \,\, \lambda, i: &\mu_{i}(\lambda) \geq 0; \\
 \forall& \,\, i: &\int_{\Lambda}\mu_{i}(\lambda)=1; \\
 \forall& \,\, \lambda,s: &\sum_{i \in I}\left(\alpha^{s}_{i} - \beta^{s}_{i}\right)\mu_{i}(\lambda)=0; \\
 \forall& \,\, i,j,k: &\int_{\Lambda}\xi_{k \vert 
j}(\lambda)\mu_{i}(\lambda) = p(k \vert j,i).
 \label{eq:thm_T_preservesNC_noncontextualboxes_end}
\end{align}
 Therefore:
 \begin{align}
 &\behavtd{k}{j}{i}: =T(\behav{k}{j}{i}) \\
         &= \Tppm \\
         &=\sum_{k,j,i}q_{O}(\tilde{k} \vert k)\left( \int_{\Lambda}\xi_{k 
\vert 
j}(\lambda)\mu_{i}(\lambda)\right)q_{P}(i 
\vert \td{i} \,\, )q_{M} (j \vert \td{j} \,\,) \\
         &=\int_{\Lambda}\underbrace{\left(\sum_{k,j}q_{O}(\td{k} \vert 
k) \xi_{k \vert j}(\lambda)q_{M}(j \vert \td{j}) 
\right)}_{:=\xi_{\td{k} \vert \td{j}}(\lambda)} \underbrace{\left( 
\sum_{i}\mu_{i}(\lambda)q_{P}(i \vert \td{i})  
\right)}_{:=\mu_{\td{i}}(\lambda)} \\
         &= \int_{\Lambda}\xi_{\td{k} \vert 
\td{j}}(\lambda)\mu_{\td{i}}(\lambda).
 \end{align}
With these new set of response functions $\{\xi_{\td{k} \vert 
\td{j}}(\lambda)\}$ and probability measures $\{\mu_{\td{i}}\}$ over 
$\Lambda$, we are going to prove that $\tilde{B}$ admits a universally 
noncontextual model (w.r.t. the transformed operational equivalences). For 
clarity we break the remaining of the proof into three  statements:
\begin{itemize} 
 \item [i)] The family $\{\xi_{\td{k} \vert 
\td{j}}(\lambda)\}$ constitute an admissible set of response functions.
On one hand:
\begin{align}
\forall \td{k},\td{j},\lambda:& \\
 \xi_{\td{k} \vert 
\td{j}}(\lambda)&:=\sum_{k,j}q_{O}(\td{k} \vert 
k) \xi_{k \vert j}(\lambda)q_{M}(j \vert \td{j}) \\
  &\geq 0.
\end{align}
On the other hand (see 
Fig.~\ref{Figure_free_operations_boxes_rhs_composition}), for all 
$\lambda$ belonging to $\Lambda$:
\begin{align}
 \sum_{\td{k}}&\xi_{\td{k} \vert \td{j}}(\lambda) = 
\sum_{\td{k}}\sum_{k,j}q_{O}(\td{k} \vert 
k) \xi_{k \vert j}(\lambda)q_{M}(j \vert \td{j}) \\
  &=\sum_{\td{k}}\sum_{j} \underbrace{\left[\sum_{k}q_{O}(\td{k}\vert 
k)\xi_{k \vert j}(\lambda) \right]}_{:=q_{\lambda}(\td{k}\vert j)}q_{M}(j 
\vert \td{j}) \\
  &= \sum_{\td{k}} \underbrace{\sum_{j}q_{\lambda}(\td{k}\vert j)q_{M}(j 
\vert \td{j})}_{:=q_{\lambda}(\td{k} \vert \td{j})} = 
\sum_{\td{k}}q_{\lambda}(\td{k} \vert \td{j})=1.
\end{align}

\begin{figure}[]
\includegraphics[scale=0.1]{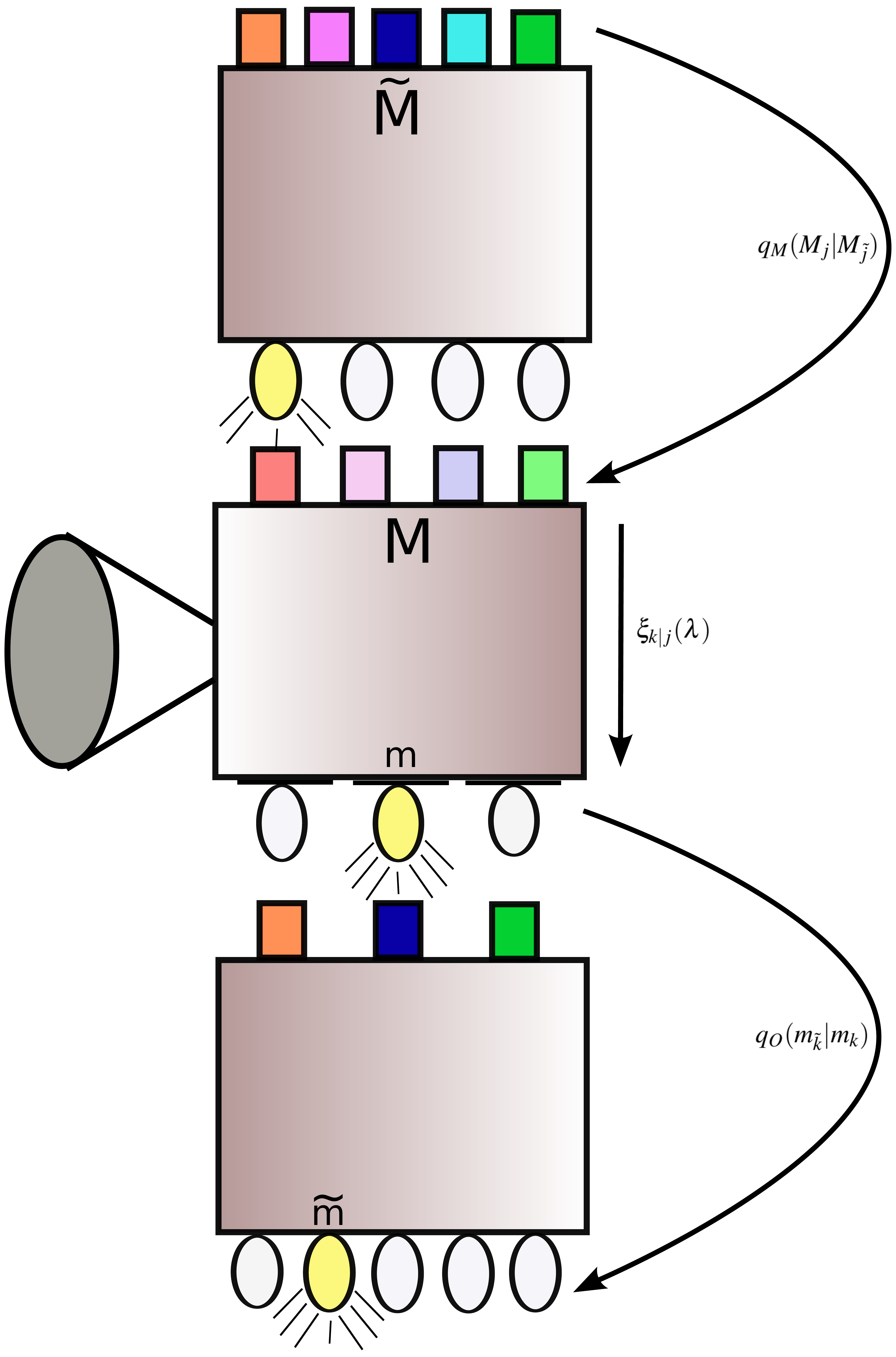}
 \caption{\footnotesize{Schematic drawing of the net effect for the 
composition of a preprocessing on the box $M$ and the post-processing on 
the outcomes. Given every choice of measurement $\tilde{M}_{\td{j}}$ all 
at least one outcome must be obtained. Again, this wiring is nothing but 
a composition of a codification and de-codification on a communication 
channel~\cite{TF17}.} 
\label{Figure_free_operations_boxes_rhs_composition}}
\end{figure}

\item [ii)] The family $\{\mu_{\td{i}}\}$ constitute a set of probability 
measures over $\Lambda$. 

It is clear from the definition that each $\mu_{\td{i}}$ is a non-negative 
function over $\Lambda$. In addition, for each $\td{i} \in \td{I}$, one 
has:
\begin{align}
 \int_{\lambda}\mu_{\td{i}}(\lambda)&=\int_{\Lambda} \sum_{i}q_{P}(i \vert 
\td{i})\mu_{i}(\lambda) \\
           &=\sum_{i}q_{P}(i \vert \td{i}) 
\underbrace{\int_{\lambda}\mu_{i}(\lambda)}_{=1} \\
           &=\sum_{i}q_{P}(i \vert \td{i})  =1.
\end{align}

\item [iii)] The operational equivalences $\mc{E}_{P}$ and $\mc{E}_{M}$ 
are lifted, through the action of free operations in $\mc{F}$, onto other 
sets of operational equivalences $\mc{E}_{\td{P}}$ and $\mc{E}_{\td{M}}$ 
which in turn also respect the principle of preparation noncontextuality 
and measurement noncontextulity~\cite{RS05,SSW17} respectively.

We prove the result for operational equivalences among preparations, 
and the result will follow in complete analogy for operational equivalences
among measurements. Given one operational equivalence among preparations 
for the ordinary prepare-and-measure scenario, labelled by $s$, \emph{i.e.} 
an equality 
\begin{align}
  &\sum_{i \in I}\left(\alpha^{s}_{i} - \beta^{s}_{i}\right)\mu_{i}(\lambda)=0, \forall 
\,\, \lambda \in \Lambda,
\end{align}
we define novel set of coefficients $\{\alpha^{s}_{\td{i}}\}_{\td{i} \in 
\td{I}}$ and $\{\beta^{s}_{\td{i}}\}_{\td{i} \in 
\td{I}}$ satisfying 
\begin{align}
\label{eq:proof_equivalences_td_alpha}
 \forall \,\, i \in [I]: \sum_{\td{i}}\alpha^{s}_{\td{i}}q_{P}(i \vert 
\td{i})=\alpha^{s}_{i},
\end{align}
and
\begin{align}
\label{eq:proof_equivalences_td_beta}
 \forall \,\, i \in [I]: \sum_{\td{i}}\beta^{s}_{\td{i}}q_{P}(i \vert 
\td{i})=\beta^{s}_{i}.
\end{align}
Now, if one defines the new set of operational equivalences for 
preparations $\mc{E}_{\td{P}}$ using 
Eqs.~\eqref{eq:proof_equivalences_td_alpha} 
and~\eqref{eq:proof_equivalences_td_beta}, it is straightforward to check 
that
\begin{equation}
\forall \,\, s, \lambda: 
\sum_{\td{i}}\alpha^{s}_{\td{i}}\mu_{\td{i}}(\lambda)= 
\sum_{\td{i}}\beta^{s}_{\td{i}}\mu_{\td{i}}(\lambda), 
\end{equation}
since these novel operational equivalences come from a lift of operational 
equivalences which are noncontextual in the ordinary (non-)transformed 
scenario.
\end{itemize}
\end{proof}

\subsection{Monotones}
\label{subsec:Quantifiers}

Now that we have defined the sets of objects and free operations, we are in position to verify if the monotones introduced in Refs.
\cite{Barrett05b, Allcock09, GWAN12,
Joshi13, Vicente14, LVN14,  GHHHHJKW14,
GA15, HGJKL15,  GA17, ACTA17, BAC17, ABM17} can be adapted to the framework of generalized contextuality.

\subsubsection{Contextual Fraction}
\label{subsubsec:ConetxtualFract}

The \emph{contexual fraction} is a contextuality quantifier based on the intuitive notion
of what fraction of a given prepare-and-measure statistics admits a noncontextual
description. Formally it  is defined as follows~\cite{AB11, ADLPBC12, ABM17, AT17}:
\begin{align}
\mc{C}: \mc{O} &\longrightarrow [0,1] \\ \nonumber
       B=\{p(k \vert j,i)\}   & \mapsto \mc{C}(B)
\end{align}
where
\begin{align}
1-\mc{C}(B):= \mbox{max } \lambda \nonumber  \\
\qquad \textup{s.t.}\quad &B=\lambda B^{NC}+(1-\lambda)B^{\prime} 
\nonumber\\
& B^{NC} \in \mathsf{NC}\left(\mathcal{S}\right)  \nonumber \\
& B^{\prime}  \in \mc{O}.  \label{def:cf2}
\end{align}

\begin{theorem}
\label{thm:cf_is_monotone}
The contextual-fraction is a resource monotone with respect to $\mc{F}$. 
\end{theorem}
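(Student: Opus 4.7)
The plan is to show that applying a free operation $T\in\mc{F}$ to a behavior $B$ can only decrease (or preserve) the contextual fraction, that is $\mc{C}(T(B))\leq \mc{C}(B)$, by transporting an optimal noncontextual-plus-residual decomposition of $B$ through $T$ and verifying that it remains a feasible decomposition of $T(B)$.

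First I would fix $T\in\mc{F}$ and an arbitrary $B\in\mc{O}$, and take an optimal decomposition achieving the maximum in the definition of $\mc{C}(B)$: write $B=\lambda B^{NC}+(1-\lambda)B^{\prime}$ with $\lambda=1-\mc{C}(B)$, $B^{NC}\in\mathsf{NC}(\mc{S})$, and $B^{\prime}\in\mc{O}$. Applying $T$ and using the linearity stated in Lemma \ref{lemma:T_is_convex},
\begin{equation}
T(B)=\lambda\, T(B^{NC})+(1-\lambda)\,T(B^{\prime}).
\end{equation}

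Next I would check that this is a feasible decomposition for $T(B)$ with respect to the (possibly new) scenario $\tilde{\mc{S}}$ in which $T(B)$ lives. Two facts do the work: (a) by Theorem \ref{thm:cf_is_monotone}'s immediate predecessor, free operations send universally noncontextual behaviors to universally noncontextual behaviors, so $T(B^{NC})\in\mathsf{NC}(\tilde{\mc{S}})$; and (b) by the very definition \eqref{def:free_operations} of $\mc{F}$ as a set of maps $\mc{O}\to\mc{O}$, we have $T(B^{\prime})\in\mc{O}$. Thus the pair $\bigl(T(B^{NC}),T(B^{\prime})\bigr)$ with weight $\lambda$ is feasible in the linear program defining $\mc{C}(T(B))$.

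Since the program in Eq.~\eqref{def:cf2} is a maximization over $\lambda$, feasibility of this decomposition implies $1-\mc{C}(T(B))\geq \lambda=1-\mc{C}(B)$, hence $\mc{C}(T(B))\leq \mc{C}(B)$, which is exactly monotonicity. No step here looks genuinely delicate: the proof is a one-line consequence of linearity of $T$ (Lemma \ref{lemma:T_is_convex}) and of the preservation of the free set under $\mc{F}$. The only point that deserves some care is bookkeeping of the operational equivalences, since $T$ also transforms the scenario; but the preceding preservation theorem has already taken care of lifting $\mc{E}_{P},\mc{E}_{M}$ to $\mc{E}_{\td{P}},\mc{E}_{\td{M}}$ in a noncontextuality-preserving way, so $T(B^{NC})\in\mathsf{NC}(\tilde{\mc{S}})$ is guaranteed with respect to the correct equivalences.
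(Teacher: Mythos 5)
Your proposal is correct and follows essentially the same route as the paper's own proof: take the optimal decomposition achieving $1-\mc{C}(B)$, push it through $T$ using the linearity of Lemma \ref{lemma:T_is_convex}, and invoke the preceding theorem that $\mc{F}$ preserves $\mathsf{NC}(\mc{S})$ to conclude the image decomposition is feasible, whence $\mc{C}(T(B))\leq\mc{C}(B)$. Your added remark about tracking the lifted operational equivalences $\mc{E}_{\td{P}},\mc{E}_{\td{M}}$ is a point the paper leaves implicit, but it changes nothing in the argument.
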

\begin{proof}
 Let $B \in \mc{O}$ be a given prepare-and-measure statistics, and let 
$\mc{C}(B)$ be equal to $\lambda_{\max}$ with the following decomposition:
\begin{equation}
 B=\lambda_{\max}B^{NC}_{\max}+(1-\lambda_{\max})B^{\prime}_{\max}.
\end{equation}
Then:
\begin{align}
T(B)&=T(\lambda_{\max}B^{NC}_{\max}+(1-\lambda_{\max})B^{\prime}_{\max}) \\
&=\lambda_{\max}T(B^{NC}_{\max})+(1-\lambda_{\max})T(B^{\prime}_{\max}) \\
&= \lambda_{\max}B^{NC}+(1-\lambda_{\max})B^{\prime}.
\end{align}
with $B^{NC}=T(B^{NC}_{\max})$ a  noncontextual box, since $T$ preserves the 
noncontextual set, and $B^{\prime}$ a valid object. Then 
$\lambda_{\max(T(B))} \geq \lambda_{\max}$. Therefore:
\beq
1-\mc{C}(T(B)) \geq 1-\mc{C}(B) \Longrightarrow \mc{C}(T(B)) \leq \mc{C}(B).
\eeq
\end{proof}

The contextual fraction  is subadditive under independent juxtapositions.

\begin{theorem}
\label{thm:cf_ad}
 Given two behaviors $B_1$ and $B_2$,we have that 
 \beq \mathcal{C}\left(B_1 \otimes B_2\right) \leq \mathcal{C}\left(B_1 \right)+ 
\mathcal{C}\left(B_2\right)-\mathcal{C}\left(B_1 
\right)\mathcal{C}\left(B_2\right) \leq \mathcal{C}\left(B_1 \right)+ 
\mathcal{C}\left(B_2\right) .\eeq
\end{theorem}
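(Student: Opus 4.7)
The plan is to obtain a canonical convex decomposition of $B_1\otimes B_2$ from the optimal decompositions of $B_1$ and $B_2$, identify the noncontextual component using the juxtaposition theorem already proved, and then read off the bound on $\mathcal{C}(B_1\otimes B_2)$.

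First I would fix decompositions achieving the contextual fraction: let $\lambda_n = 1-\mathcal{C}(B_n)$ for $n=1,2$, and write $B_n = \lambda_n B_n^{NC} + (1-\lambda_n) B_n'$ with $B_n^{NC}\in \mathsf{NC}(\mathcal{S}_n)$ and $B_n'\in\mc{O}$. Since juxtaposition of behaviours is bilinear in each argument (this follows immediately from the definition $p(k_1k_2\vert j_1j_2,i_1i_2)=p(k_1\vert j_1,i_1)p(k_2\vert j_2,i_2)$, which is separately linear in each factor), distributing gives
\begin{align}
B_1\otimes B_2 = \;& \lambda_1\lambda_2\, B_1^{NC}\otimes B_2^{NC} + \lambda_1(1-\lambda_2)\, B_1^{NC}\otimes B_2' \nonumber\\
&{}+ (1-\lambda_1)\lambda_2\, B_1'\otimes B_2^{NC} + (1-\lambda_1)(1-\lambda_2)\, B_1'\otimes B_2'.
\end{align}

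Next I would group the last three terms into a single object $B'\in\mc{O}$ and observe that by the juxtaposition theorem already proved, $B_1^{NC}\otimes B_2^{NC}$ is noncontextual. This gives the two-term decomposition
\begin{equation}
B_1\otimes B_2 = \lambda_1\lambda_2 \bigl(B_1^{NC}\otimes B_2^{NC}\bigr) + (1-\lambda_1\lambda_2)\, B',
\end{equation}
which is a feasible point of the optimization defining $\mathcal{C}(B_1\otimes B_2)$. Hence $1-\mathcal{C}(B_1\otimes B_2)\geq \lambda_1\lambda_2 = (1-\mathcal{C}(B_1))(1-\mathcal{C}(B_2))$, and expanding yields
\begin{equation}
\mathcal{C}(B_1\otimes B_2) \leq \mathcal{C}(B_1)+\mathcal{C}(B_2)-\mathcal{C}(B_1)\mathcal{C}(B_2).
\end{equation}
The second inequality in the statement is immediate because $\mathcal{C}(B_1)\mathcal{C}(B_2)\geq 0$.

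I do not anticipate a serious obstacle: the main thing to verify carefully is that the residual object $B'$ obtained by renormalizing the three ``non-NC-NC'' terms is indeed a valid behaviour in $\mc{O}$ (i.e., a convex combination of valid behaviours, hence automatically a valid conditional distribution), and that the bilinearity of $\otimes$ lets the expansion go through cleanly. Both are routine consequences of the product-form definition of juxtaposition and the convexity of $\mc{O}$.
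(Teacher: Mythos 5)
Your proof is correct and follows essentially the same route as the paper: both take the optimal decompositions $B_n=\lambda_n B_n^{NC}+(1-\lambda_n)B_n'$, use the juxtaposition theorem to see that $B_1^{NC}\otimes B_2^{NC}$ is noncontextual, and conclude that $\lambda_1\lambda_2$ is feasible for $B_1\otimes B_2$, giving $1-\mathcal{C}(B_1\otimes B_2)\geq(1-\mathcal{C}(B_1))(1-\mathcal{C}(B_2))$. Your explicit four-term bilinear expansion is in fact a slightly cleaner way to exhibit the residual object $B'$ as a genuine (normalized, convex-combination) behavior than the paper's component-wise inequality argument, but the underlying idea is identical.
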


\begin{proof}
 Let $B_1^*$ and $B_2^*$ be the behaviors achieving the minimum in Eq. \eqref{def:cf2} for $B_1$ and $B_2$, respectively,
 with $1-\mathcal{C}\left(B_1 \right)=\lambda_1$ and $1-\mathcal{C}\left(B_2 \right)=\lambda_2$.  The decomposition in 
 Eq. \eqref{def:cf2} implies that 
 \begin{align}
  p\left(k_1 \vert j_1,i_1\right) &\leq \lambda_1  p^*\left(k_1 \vert j_1,i_1\right), \ \forall \ i_1,j_1,k_1,\\
  p\left(k_2 \vert j_2,i_2\right) &\leq \lambda_1  p^*\left(k_2 \vert j_2,i_2\right), \ \forall \ i_2,j_2,k_2,
 \end{align}
 which in turn imply that 
 \beq p\left(k_1 k_2 \vert j_1j_2,i_1i_2\right) \leq \lambda_1\lambda_2 p^*\left(k_1 k_2 \vert j_1j_2,i_1i_2\right), \ \forall \ i_1,j_1,k_1,i_2,j_2,k_2,\label{eq:cf_ad}\eeq
 where $p^*\left(k_1 k_2 \vert j_1j_2,i_1i_2\right)= p^*\left(k_1 \vert j_1,i_1\right)p^*\left(k_2 \vert j_2,i_2\right)$ are the 
 probabilities given by $B_1^* \otimes B_2^*$.
From Eq. \eqref{eq:cf_ad} if follows that there is a behavior $B'$ such that 
\beq B_1 \otimes B_2 = \lambda_1 \lambda_2 B_1^* \otimes B_2^* + B'. \eeq
Hence, we have
\begin{align}
1-\mathcal{C}\left(B_1 \otimes B_2 \right)& \geq \lambda_1 \lambda_2\\
&= \left(1-\mathcal{C}\left(B_1\right)\right)\left(1-\mathcal{C}\left( B_2 \right)\right),
\end{align}
which in turn implies the desired result.
\end{proof}

\subsubsection{Robustness Measures}

Robustness of contextuality is a quantifier based on the intuitive notion of how much noncontextual noise a
given prepare-and-measure statistics can sustain before becoming noncontextual. Given a scenario $\mc{S}$ one defines the \emph{robustness 
measure}~\cite{ACTA17} as follows:

\begin{align}
 \label{def:eq_robustness}
 \mc{R}: \mc{O} &\longrightarrow [0,1] \\
              B  &\mapsto \mc{R}(B) \nonumber
\end{align}
where
\begin{align}
\label{def:eq_minimization_for_robustness}
\mc{R}(B):= \mbox{min } \lambda \nonumber  \\
 \textup{s.t.}\quad &(\lambda B^{NC}+(1-\lambda)B)\in \mathsf{NC}\left(\mathcal{S}\right)  \nonumber\\
                   & B^{NC} \in \mathsf{NC}\left(\mathcal{S}\right) .
\end{align}

\begin{theorem}
\label{thm:robustness_is_monotone}
The robustness measure $\mc{R}$ is a resource monotone with respect to the 
free-operations in $\mc{F}$.
\end{theorem}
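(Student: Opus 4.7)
The plan is to mirror the proof of Theorem~\ref{thm:cf_is_monotone} for the contextual fraction, exploiting the two structural properties of every $T\in\mc{F}$ that have already been established earlier in the paper: linearity (Lemma~\ref{lemma:T_is_convex}) and preservation of the noncontextual set (the theorem immediately preceding the monotones subsection). Both properties enter in an essentially symmetric way to what happens for $\mc{C}$, and the argument is expected to be a one-liner once they are in place.

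First I would fix $B\in\mc{O}$ and let $\lambda^{*}:=\mc{R}(B)$ be the minimum attained in Eq.~\eqref{def:eq_minimization_for_robustness}, with witness $B^{NC}_{*}\in\mathsf{NC}(\mathcal{S})$, so that the mixture
\[
M := \lambda^{*}B^{NC}_{*}+(1-\lambda^{*})B
\]
is itself noncontextual. I would then apply $T$ to $M$ and invoke linearity to obtain
\[
T(M)=\lambda^{*}\,T(B^{NC}_{*})+(1-\lambda^{*})\,T(B).
\]

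Next, since $T$ maps $\mathsf{NC}(\mathcal{S})$ into itself, both $T(B^{NC}_{*})$ and $T(M)$ belong to $\mathsf{NC}(\mathcal{S})$. The pair $(T(B^{NC}_{*}),\lambda^{*})$ is therefore a feasible point of the minimization defining $\mc{R}(T(B))$, which yields
\[
\mc{R}(T(B))\leq\lambda^{*}=\mc{R}(B),
\]
as desired.

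I do not anticipate any genuine obstacle. The only subtlety worth flagging is that Eq.~\eqref{def:eq_minimization_for_robustness} demands that the noise term $B^{NC}$ itself be noncontextual, not merely that the resulting mixture be noncontextual; this is precisely why the $\mathsf{NC}$-preservation property of $T$, and not only its linearity, is essential to transport the optimal witness $B^{NC}_{*}$ through $T$ and keep the argument inside the feasible region.
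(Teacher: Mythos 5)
Your proof is correct and follows essentially the same route as the paper's: apply $T$ to the optimal mixture, use linearity (Lemma~\ref{lemma:T_is_convex}) and the fact that $T$ preserves $\mathsf{NC}(\mathcal{S})$ to show that $\bigl(T(B^{NC}_{*}),\lambda^{*}\bigr)$ remains feasible for the minimization defining $\mc{R}(T(B))$. Your closing remark that both the transformed mixture \emph{and} the transformed noise term must land in $\mathsf{NC}(\mathcal{S})$ is exactly the point the paper's proof also relies on.
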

\begin{proof}
 Given $B \in \mc{O}$, let $\lambda_{\min}$ be the minimum of the 
optimization problem \eqref{def:eq_minimization_for_robustness} above, in 
which
\beq
\lambda_{\min}B^{NC}_{\min}+(1-\lambda_{\min})B = B^{\ast},
\eeq
with $B^{\ast}$ noncontextual. Then:
\begin{align}
T(B^{\ast})&=T(\lambda_{\min}B^{NC}_{\min}+(1-\lambda_{\min})B) \\
           &=\lambda_{\min}T(B^{NC}_{\min})+(1-\lambda_{\min})T(B) \\
           &=\lambda_{\min}\td{B}^{NC}_{\min}+(1-\lambda_{\min})T(B).
\end{align}
Since, $T$ preserves the noncontextual set, $T(B^{\ast})$ is noncontextual 
as well as $\td{B}^{NC}_{\min}$ and therefore $\lambda_{\min(T(B))}\leq 
\lambda_{\min}$.
\end{proof}

The robustness is subadditive under independent juxtapositions.

\begin{theorem}
 Given two behaviors $B_1$ and $B_2$,we have that 
 \beq \mathcal{R}\left(B_1 \otimes B_2\right) \leq \mathcal{R}\left(B_1 \right)+ 
\mathcal{R}\left(B_2\right)-\mathcal{R}\left(B_1 
\right)\mathcal{R}\left(B_2\right) \leq \mathcal{R}\left(B_1 \right)+ 
\mathcal{R}\left(B_2\right) .\eeq
\end{theorem}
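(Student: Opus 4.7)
The plan is to lift the contextual-fraction subadditivity argument (Theorem~\ref{thm:cf_ad}) to the robustness setting. The key twist is that the minimisation defining $\mc{R}$ insists that the noise behaviour itself lie in $\mathsf{NC}$, so the CF bookkeeping cannot be reused verbatim. First I would set $\nu_i := \mc{R}(B_i)$, attained by noncontextual $N_i$ together with noncontextual mixtures $B_i^{\ast} := \nu_i N_i + (1-\nu_i) B_i$ for $i=1,2$, and define $\mu := 1-(1-\nu_1)(1-\nu_2) = \nu_1 + \nu_2 - \nu_1 \nu_2$. By the juxtaposition theorem, $B_1^{\ast} \otimes B_2^{\ast}$ is noncontextual.

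Expanding $B_1^{\ast} \otimes B_2^{\ast} = [\nu_1 N_1 + (1-\nu_1) B_1] \otimes [\nu_2 N_2 + (1-\nu_2) B_2]$ and collecting the coefficient of $B_1 \otimes B_2$ yields the algebraic identity
\[
B_1^{\ast} \otimes B_2^{\ast} = \mu\, Z + (1-\mu)\, B_1 \otimes B_2,
\]
where
\[
Z := \frac{1}{\mu}\bigl[\nu_1\nu_2\, N_1 \otimes N_2 + \nu_1(1-\nu_2)\, N_1 \otimes B_2 + (1-\nu_1)\nu_2\, B_1 \otimes N_2\bigr].
\]
If $Z$ can be certified to lie in $\mathsf{NC}$, this decomposition witnesses $\lambda = \mu$ as feasible in the optimisation of Eq.~\eqref{def:eq_minimization_for_robustness} for $B_1 \otimes B_2$, giving $\mc{R}(B_1 \otimes B_2) \leq \mu$. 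The second inequality $\mu \leq \nu_1 + \nu_2$ is immediate from $\nu_1 \nu_2 \geq 0$.

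Showing $Z \in \mathsf{NC}$ is the main obstacle. My approach would be to eliminate the problematic factors $N_1 \otimes B_2$ and $B_1 \otimes N_2$ — whose free slots are possibly contextual $B_i$ — by substituting $(1-\nu_i) B_i = B_i^{\ast} - \nu_i N_i$ and rearranging, aiming to recast $Z$ as a convex combination drawn from the family $\{N_1 \otimes N_2,\, N_1 \otimes B_2^{\ast},\, B_1^{\ast} \otimes N_2\}$, each of which is noncontextual by the juxtaposition theorem. The contextual-fraction analogue avoids this difficulty entirely because its noise term is unconstrained, so the present proof has to do the additional algebraic work at exactly this step; the same pattern — juxtapose the optimal decompositions, expand, reorganise the cross terms into manifestly noncontextual juxtapositions — is what one expects to deliver the bound.
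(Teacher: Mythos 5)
You have put your finger on exactly the right difficulty: unlike the contextual fraction, the optimisation \eqref{def:eq_minimization_for_robustness} defining $\mathcal{R}$ requires the noise behaviour itself to lie in $\mathsf{NC}\left(\mathcal{S}\right)$, so the complementary term $Z$ in the expansion of $B_1^{\ast}\otimes B_2^{\ast}$ must be certified noncontextual before $\lambda=\mu$ becomes feasible. The paper does not address this point at all (its proof is the single sentence that the argument ``follows the same lines'' as Thm.~\ref{thm:cf_ad}), and your attempt shows that those lines do not transfer verbatim. Unfortunately, the repair you sketch does not close the gap. Carrying out the substitutions $(1-\nu_2)B_2=B_2^{\ast}-\nu_2N_2$ and $(1-\nu_1)B_1=B_1^{\ast}-\nu_1N_1$ inside $\mu Z$ gives
\begin{equation*}
\mu Z=\nu_1\,N_1\otimes B_2^{\ast}+\nu_2\,B_1^{\ast}\otimes N_2-\nu_1\nu_2\,N_1\otimes N_2,
\end{equation*}
an affine combination in which $N_1\otimes N_2$ carries the coefficient $-\nu_1\nu_2<0$ whenever both robustnesses are nonzero; this is not a convex combination of noncontextual juxtapositions and so does not certify $Z\in\mathsf{NC}\left(\mathcal{S}\right)$. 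Equivalently, $\mu Z=\nu_1\,N_1\otimes B_2^{\ast}+(1-\nu_1)\nu_2\,B_1\otimes N_2$ still contains the bare, possibly contextual, factor $B_1$, and $B_1\otimes N_2$ is contextual whenever $B_1$ is (marginalise over the second system). Your plan therefore fails at precisely the step you flagged as the main obstacle.

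The standard way out is to move the negatively weighted terms to the other side of the identity: writing $B_i=a_iB_i^{\ast}-b_iN_i$ with $a_i=1/(1-\nu_i)$ and $b_i=\nu_i/(1-\nu_i)$ and expanding the product yields $B_1\otimes B_2+a_1b_2\,B_1^{\ast}\otimes N_2+a_2b_1\,N_1\otimes B_2^{\ast}=a_1a_2\,B_1^{\ast}\otimes B_2^{\ast}+b_1b_2\,N_1\otimes N_2$, in which both the added noise and the resulting mixture are manifestly convex combinations of noncontextual juxtapositions. But the feasible value this certifies is $(\nu_1+\nu_2)/(1+\nu_1\nu_2)$, which exceeds $\nu_1+\nu_2-\nu_1\nu_2$ by $\nu_1\nu_2(1-\nu_1)(1-\nu_2)/(1+\nu_1\nu_2)$. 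So this route proves the outer inequality $\mathcal{R}(B_1\otimes B_2)\le\mathcal{R}(B_1)+\mathcal{R}(B_2)$ but not the stated intermediate bound $\nu_1+\nu_2-\nu_1\nu_2$. To obtain the latter you would need an independent argument that some noncontextual noise of total weight $\mu$ works; neither your outline nor the paper supplies one, and it does not follow from the juxtaposition theorem alone.
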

\begin{proof}
The proof follows the same lines of the proof of Thm. \ref{thm:cf_ad}.
\end{proof}

The proof of Thm.\ref{thm:robustness_is_monotone} above suggests that for 
a more restrictive class of free operations, one can relax the assumption 
present at the constraints of the optimization problem 
in.~\eqref{def:eq_minimization_for_robustness}, and instead of optimizing 
over all \pandms which are noncontextual, we could fix a given reference 
noncontextual prepare-and-measure statistics, say $B_{ref}$, and then define the a new 
robustness measure with respect to that fixed prepare-and-measure 
statistics: 

\begin{align}
 \label{def:eq_robustness_ref}
 \mc{R}_{ref}: \mc{O} &\longrightarrow [0,1] \\
              B  &\mapsto \mc{R}_{ref}(B) \nonumber
\end{align}
where
\begin{align}
\label{def:eq_minimization_for_robustness_ref}
\mc{R}_{ref}(B):= \mbox{min } \lambda \nonumber  \\
 \textup{s.t.}\quad &(\lambda B_{ref}+(1-\lambda)B) \in \mathsf{NC}\left(\mathcal{S}\right) .      
\end{align}
This new measure is a monotone whenever $B_{ref}$ is preserved upon action 
of (a more restrictive subset of) free operations. With this on hands it 
is possible to enunciate the following result: 
\begin{corollary}
 Let $\mc{F}_{ref} \subseteq \mc{F}$ be all free operations which preserves
 $B_{ref} \in \mc{O}$, \emph{i.e.} 
\beq
\mc{F}_{ref}:=\{T \in \mc{F}; \,\, T(B_{ref})=B_{ref}\}. 
\eeq
Under this new set of free operations, $\mc{R}_{ref}$ is a resource monotone.
\end{corollary}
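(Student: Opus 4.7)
The plan is to mimic, almost verbatim, the structure of the proof of Theorem \ref{thm:robustness_is_monotone}, but with the crucial observation that the only reason the earlier proof did not yield a monotone with respect to a \emph{fixed} reference state is precisely that $T(B^{NC}_{\min})$ was in general a \emph{different} noncontextual behavior than $B^{NC}_{\min}$. Restricting to $\mc{F}_{ref}$ fixes exactly this obstruction.

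Concretely, I would start from an optimal decomposition for $B$: let $\lambda_{\min}=\mc{R}_{ref}(B)$ be the minimum attained in \eqref{def:eq_minimization_for_robustness_ref}, so that
\beq
B^{\ast}:=\lambda_{\min}B_{ref}+(1-\lambda_{\min})B \in \mathsf{NC}\left(\mathcal{S}\right).
\eeq
Then I would apply an arbitrary $T\in\mc{F}_{ref}$ to both sides and use two ingredients that are already established in the excerpt. First, Lemma \ref{lemma:T_is_convex} says that $T$ acts affinely on $\mc{O}$, which lets me push $T$ inside the convex combination:
\beq
T(B^{\ast})=\lambda_{\min}T(B_{ref})+(1-\lambda_{\min})T(B).
\eeq
Second, by definition of $\mc{F}_{ref}$ we have $T(B_{ref})=B_{ref}$, so the right-hand side collapses to
\beq
T(B^{\ast})=\lambda_{\min}B_{ref}+(1-\lambda_{\min})T(B).
\eeq

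To conclude I would invoke the preservation of the noncontextual set under $\mc{F}$ (which is inherited by the sub-family $\mc{F}_{ref}$): since $B^{\ast}\in \mathsf{NC}\left(\mathcal{S}\right)$, so is $T(B^{\ast})$. The displayed equality then exhibits $\lambda_{\min}$ as a \emph{feasible} value of $\lambda$ in the optimization \eqref{def:eq_minimization_for_robustness_ref} defining $\mc{R}_{ref}(T(B))$. Taking the infimum over feasible $\lambda$ gives $\mc{R}_{ref}(T(B))\leq\lambda_{\min}=\mc{R}_{ref}(B)$, which is the monotonicity statement.

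I do not anticipate a real technical obstacle: all the work has been done in Lemma \ref{lemma:T_is_convex} and in the theorem establishing that $T\in\mc{F}$ preserves universally noncontextual models. The only delicate point to be careful about is that the restriction $T(B_{ref})=B_{ref}$ is needed pointwise (not only on the noncontextual set) so that we can substitute $B_{ref}$ back into the decomposition and land in the same optimization program that defines $\mc{R}_{ref}$ at $T(B)$; without this exact equality one would again drift to a new reference behavior and only recover the weaker monotonicity of $\mc{R}$. Otherwise the argument is a two-line manipulation.
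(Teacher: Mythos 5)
Your proposal is correct and is exactly the argument the paper intends: the paper's ``proof'' of this corollary is a one-line remark that it follows the same lines as Theorem \ref{thm:robustness_is_monotone}, and your write-up simply makes that explicit by applying $T\in\mc{F}_{ref}$ to the optimal decomposition, using linearity, the fixed-point condition $T(B_{ref})=B_{ref}$, and preservation of $\mathsf{NC}\left(\mathcal{S}\right)$ to exhibit $\lambda_{\min}$ as feasible for $T(B)$. Your closing observation about why the pointwise condition $T(B_{ref})=B_{ref}$ is the essential new ingredient is also the right diagnosis.
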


The proof follows the same lines as the proof of the proof of Thm.\ref{thm:robustness_is_monotone}.

Notice that the proofs of Thms. \ref{thm:cf_is_monotone} and \ref{thm:robustness_is_monotone} rely only on the fact 
that the operations in $\mathcal{F}$ are linear and preserve 
$\mathsf{NC}\left(\mathcal{S}\right)$. The results in Sec. \ref{sec: GenContextuality} imply that $\mc{C}, \mc{R}$ and $\mc{R}_{ref}$ can be computed
efficiently using linear programming \cite{SSW17}.

\subsection{Kullback-Liebler divergence}

Given two probability distributions $p$ and $q$ in a sample
space $\Omega$, the Kullback-Leiber divergence or relative entropy between 
$p$ and $q$ \beq D_{KL}\left(p\|q\right)= \sum_{i\in \Omega} 
p_i\log\left(\frac{p_i}{q_i}\right)\eeq
is a measure of the difference between the two probability distributions 
$p$ and $q$~\cite{KL51}. With this, one can define the relative 
entropy $D_{KL}\left(B\|B'\right)$ between
two prepare-and-measure statistics $B=\left\{p\left(\cdot \vert j,i\right)\right\}$ and $B'=\left\{p'\left(\cdot \vert j,i\right)\right\}$
as the relative entropy 
between the output distributions obtained from $B$
and $B'$ for the optimal  choice of preparation and measurement:
\beq D_{KL}\left(B\|B'\right) \coloneqq \max_{i,j} \;  D_{KL}\left(p\left(\cdot \vert j,i\right)\|p\left(\cdot \vert j,i\right)\right).\eeq
This quantity measures  the distinguishability of $B$ from $B'$. 
We can now define the relative entropy of contextuality \cite{DGG05, GHHHHJKW14, HGJKL15}
\beq \mc{KL}\left(B\right)\coloneqq \min_{B'\in \mathsf{NC}\left(\mathcal{S}\right)}D_{KL}\left(B\|B'\right), \label{eq:defKL}\eeq
which quantifies the distinguishability of
$B$ from its closest, with respect to $D_{KL}$, noncontextual preapre-and-measure statistics.

\begin{theorem}
 The relative entropy of contextuality $\mathcal{KL}$ is a resource 
monotone with respect to $\mc{F}$.
\end{theorem}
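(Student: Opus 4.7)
The plan is to combine two standard properties of the Kullback--Leibler divergence — the data-processing inequality and joint convexity — with the fact, already established above, that every $T \in \mc{F}$ maps $\mathsf{NC}(\mathcal{S})$ into $\mathsf{NC}(\mathcal{S}')$. The basic strategy is: fix an optimal noncontextual $B^{\ast}$ witnessing $\mc{KL}(B)$, observe that $T(B^{\ast})$ is noncontextual so it is a feasible point in the minimization defining $\mc{KL}(T(B))$, and then bound $D_{KL}(T(B)\|T(B^{\ast}))$ from above by $D_{KL}(B\|B^{\ast})$.

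Concretely, first I would pick $B^{\ast}\in \mathsf{NC}(\mathcal{S})$ achieving the minimum in \eqref{eq:defKL}, so that $D_{KL}(B\|B^{\ast})=\mc{KL}(B)$. By the theorem proved in Subsec.~\ref{subsec:FreeOperations}, $T(B^{\ast})\in \mathsf{NC}(\mathcal{S}')$, hence
\beq
\mc{KL}(T(B))\;\leq\; D_{KL}\bigl(T(B)\,\big\|\,T(B^{\ast})\bigr).
\eeq
So the whole task reduces to showing $D_{KL}(T(B)\|T(B^{\ast}))\leq D_{KL}(B\|B^{\ast})$.

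Next, I would fix an arbitrary pair $(\tilde{i},\tilde{j})$ and analyze the output distribution over $\tilde{k}$. Writing
\beq
p_{T(B)}(\tilde{k}\vert \tilde{j},\tilde{i}) \;=\; \sum_{i,j} q_{P}(i\vert\tilde{i})\,q_{M}(j\vert\tilde{j}) \,\Bigl(\sum_{k} q_{O}(\tilde{k}\vert k)\,p(k\vert j,i)\Bigr),
\eeq
and doing the same for $B^{\ast}$, I would first apply the data-processing inequality for KL to the post-processing channel $q_{O}$ (the stochastic map $k\mapsto\tilde{k}$) to obtain, for each fixed $(i,j)$,
\beq
D_{KL}\!\Bigl(\textstyle\sum_{k}q_{O}(\cdot\vert k)p(k\vert j,i)\,\Big\Vert\,\sum_{k}q_{O}(\cdot\vert k)p^{\ast}(k\vert j,i)\Bigr)\;\leq\; D_{KL}\!\bigl(p(\cdot\vert j,i)\,\big\Vert\,p^{\ast}(\cdot\vert j,i)\bigr).
\eeq
Then, viewing the pre-processing on the preparation and measurement labels as a convex mixing of the $(i,j)$-indexed output distributions with weights $q_{P}(i\vert\tilde{i})q_{M}(j\vert\tilde{j})$ (the same weights appearing for both $B$ and $B^{\ast}$), I would invoke the joint convexity of $D_{KL}$ to conclude
\beq
D_{KL}\!\bigl(p_{T(B)}(\cdot\vert\tilde{j},\tilde{i})\,\big\Vert\,p_{T(B^{\ast})}(\cdot\vert\tilde{j},\tilde{i})\bigr)\;\leq\; \sum_{i,j} q_{P}(i\vert\tilde{i})\,q_{M}(j\vert\tilde{j})\, D_{KL}\!\bigl(p(\cdot\vert j,i)\,\big\Vert\,p^{\ast}(\cdot\vert j,i)\bigr).
\eeq
Each term on the right is at most $\max_{i,j}D_{KL}(p(\cdot\vert j,i)\|p^{\ast}(\cdot\vert j,i))=D_{KL}(B\|B^{\ast})=\mc{KL}(B)$, and the weights sum to $1$, so the whole expression is $\leq \mc{KL}(B)$.

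Finally, taking the maximum over $(\tilde{i},\tilde{j})$ yields $D_{KL}(T(B)\|T(B^{\ast}))\leq \mc{KL}(B)$, and combining with the feasibility bound from step one gives $\mc{KL}(T(B))\leq \mc{KL}(B)$, which is exactly monotonicity. The only mildly subtle point — and the part I expect to be the main obstacle to presenting cleanly — is the interaction between the $\max_{i,j}$ in the definition of $D_{KL}$ on behaviours and the convex-combination step: one must make sure that the same $B^{\ast}$ is used uniformly across all $(\tilde{i},\tilde{j})$ (otherwise feasibility in the minimization would fail), and that the bound by $\mc{KL}(B)$ holds for every $(i,j)$ that actually appears with positive weight in the pre-processing. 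Both issues are resolved by bounding pointwise by the global maximum, which is why the inequality closes.
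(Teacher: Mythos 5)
Your proposal is correct and follows essentially the same route as the paper's proof: fix the optimal $B^{\ast}$, use preservation of $\mathsf{NC}\left(\mathcal{S}\right)$ to get feasibility of $T(B^{\ast})$, contract the divergence through the channel, and bound the resulting $q_P q_M$-weighted average by the maximum over $(i,j)$. The only cosmetic difference is that the paper compresses your two named steps (data-processing for $q_O$ and joint convexity for the mixture over $(i,j)$) into a single application of the log sum inequality over the joint index $(i,j,k)$, arriving at exactly the same intermediate bound.
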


\begin{proof}
 Given $B \in \mathcal{O}$, let $B^*$ be the noncontextual prepare-and-measure statistics achieving the minimum in
 Eq.~\eqref{eq:defKL}. Given $T \in \mathcal{F}$, we have
 \begin{widetext}
 \begin{align}
  \mc{KL}\left(T\left(B\right)\right) &\leq  \max_{\tilde{i},\tilde{j}} D_{KL}\left(p\left(\cdot\vert \tilde{j},\tilde{i}\right) \| p^*\left(\cdot\vert \tilde{j},\tilde{i}\right)\right)\label{eq:KLmon1}\\
  & =  \max_{\tilde{i},\tilde{j}} \sum_{\tilde{k}} p\left(\tilde{k}\vert \tilde{j},\tilde{i}\right) \log\left[\frac{p\left(\tilde{k}\vert \tilde{j},\tilde{i}\right)}{p^*\left(\tilde{k}\vert \tilde{j},\tilde{i}\right)}\right]\label{eq:KLmon2}\\
& = \max_{\tilde{i},\tilde{j}} \sum_{\tilde{k}} \sum_{i,j,k}q_O\left(\tilde{k}\vert k\right)p\left(k\vert j,i\right) q_M\left(j\vert \tilde{j}\right)q_P\left(i\vert \tilde{i}\right)\log\left[\frac{\sum_{i,j,k}q_O\left(\tilde{k}\vert k\right)p\left(k\vert j,i\right) q_M\left(j\vert \tilde{j}\right)q_P\left(i\vert \tilde{i}\right)}{\sum_{i,j,k}q_O\left(\tilde{k}\vert k\right)p^*\left(k\vert j,i\right) q_M\left(j\vert \tilde{j}\right)q_P\left(i\vert \tilde{i}\right)}\right]\label{eq:KLmon3}\\
& \leq  \max_{\tilde{i},\tilde{j}} \sum_{\tilde{k}} \sum_{i,j,k} q_O\left(\tilde{k}\vert k\right)p\left(k\vert j,i\right) q_M\left(j\vert \tilde{j}\right)q_P\left(i\vert \tilde{i}\right)\log\left[\frac{p\left(k\vert j,i\right) }{p^*\left(k\vert j,i\right) }\right]\label{eq:KLmon4}\\
&=  \max_{\tilde{i},\tilde{j}}  \sum_{i,j,k} p\left(k\vert j,i\right) q_M\left(j\vert \tilde{j}\right)q_P\left(i\vert \tilde{i}\right)\log\left[\frac{p\left(k\vert j,i\right) }{p^*\left(k\vert j,i\right) }\right]\label{eq:KLmon5}\\
&= \max_{\tilde{i},\tilde{j}}  \sum_{i,j}q_M\left(j\vert \tilde{j}\right)q_P\left(i\vert \tilde{i}\right)  \left(\sum_k p\left(k\vert j,i\right) \log\left[\frac{p\left(k\vert j,i\right) }{p^*\left(k\vert j,i\right) }\right]\right)\label{eq:KLmon6}\\
&\leq \max_{i,j} \sum_k p\left(k\vert j,i\right) \log\left[\frac{p\left(k\vert j,i\right) }{p^*\left(k\vert j,i\right) }\right]\label{eq:KLmon7}\\
&=\mathcal{KL}\left(B\right).
\end{align}
\end{widetext}
Eq.~\eqref{eq:KLmon1} follows form the definition of $\mathcal{KL}\left(T(B)\right)$ and the fact that $T\left(B^*\right) \in \mathsf{NC}\left(\mathcal{S}\right)$,
Eq.~\eqref{eq:KLmon2} follows from the definition of $D_{KL}$, Eq.~\eqref{eq:KLmon3} follows from the definition of $T\in \mathcal{F}$,
Eq.~\eqref{eq:KLmon4} follows from the log sum inequality, Eq.~\eqref{eq:KLmon5} follows from basic algebra, Eq.~\eqref{eq:KLmon6} follows from
$\sum_{\tilde{k}} q_{O}\left(\tilde{k}\vert k\right)=1$ and  Eq.~\eqref{eq:KLmon7} follows from the fact that 
the average is smaller than the
largest value.
\end{proof}

The relative entropy of contextuality is subadditive under independent juxtapositions.

\begin{theorem}
 Given two behaviors $B_1$ and $B_2$,we have that 
 \beq \mathcal{KL}\left(B_1 \otimes B_2\right) \leq \mathcal{KL}\left(B_1 \right)+ \mathcal{KL}\left(B_2\right).\eeq
\end{theorem}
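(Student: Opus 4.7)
The plan is to exploit the tensor-product structure of $B_1 \otimes B_2$ together with the well known additivity of the Kullback--Leibler divergence on product distributions, and finally use the obvious bound $\max(f+g) \leq \max f + \max g$.

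First, I would pick a pair of noncontextual behaviors $B_1^{\ast}$ and $B_2^{\ast}$ achieving the minima in the definition of $\mathcal{KL}(B_1)$ and $\mathcal{KL}(B_2)$, respectively. By the theorem proved earlier in the excerpt stating that the juxtaposition of two noncontextual behaviors is noncontextual, the product $B_1^{\ast} \otimes B_2^{\ast}$ belongs to $\mathsf{NC}(\mathcal{S}_1 \otimes \mathcal{S}_2)$. Therefore it is an admissible candidate in the minimization defining $\mathcal{KL}(B_1 \otimes B_2)$, so
\begin{equation}
\mathcal{KL}(B_1 \otimes B_2) \,\leq\, D_{KL}\bigl(B_1 \otimes B_2 \,\big\|\, B_1^{\ast} \otimes B_2^{\ast}\bigr).
\end{equation}

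Next, I would unfold the right-hand side using the definition of $D_{KL}$ on behaviors. For any fixed choice of preparations $(i_1,i_2)$ and measurements $(j_1,j_2)$, the output distribution of $B_1 \otimes B_2$ factorizes as $p_1(\cdot\vert j_1,i_1)\,p_2(\cdot\vert j_2,i_2)$ (and analogously for $B_1^{\ast}\otimes B_2^{\ast}$). The standard additivity identity for the Kullback--Leibler divergence on product distributions then gives
\begin{equation}
D_{KL}\bigl(p_1 p_2 \,\|\, p_1^{\ast} p_2^{\ast}\bigr) = D_{KL}(p_1\|p_1^{\ast}) + D_{KL}(p_2\|p_2^{\ast}),
\end{equation}
which I would verify in a single line by expanding the logarithm of the product.

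Finally, taking the maximum over $(i_1 i_2, j_1 j_2)$ and using the elementary inequality $\max_{x,y}[f(x)+g(y)] \leq \max_x f(x) + \max_y g(y)$, I obtain
\begin{equation}
D_{KL}(B_1\otimes B_2 \,\|\, B_1^{\ast}\otimes B_2^{\ast}) \,\leq\, D_{KL}(B_1\|B_1^{\ast}) + D_{KL}(B_2\|B_2^{\ast}) = \mathcal{KL}(B_1) + \mathcal{KL}(B_2),
\end{equation}
which combined with the first inequality yields the claim. There is no real obstacle here: the only slightly delicate point is making sure that the juxtaposition $B_1^{\ast} \otimes B_2^{\ast}$ lies in the noncontextual set of the juxtaposed scenario (rather than of the individual ones), but this is exactly what the earlier juxtaposition theorem provides.
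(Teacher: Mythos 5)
Your proposal is correct and follows essentially the same route as the paper: bound $\mathcal{KL}(B_1\otimes B_2)$ by $D_{KL}(B_1\otimes B_2\|B_1^{\ast}\otimes B_2^{\ast})$, use additivity of the relative entropy on product distributions, and then split the maximum. You are in fact slightly more careful than the paper in explicitly invoking the juxtaposition theorem to justify that $B_1^{\ast}\otimes B_2^{\ast}$ is an admissible (noncontextual) candidate in the minimization.
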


\begin{proof}
 Let $B_1^*$ and $B_2^*$ be  behaviors achieving the minimum in Eq. \eqref{eq:defKL} for $B_1$ and $B_2$, respectively.
 Then, we have
 \begin{align}
  \mathcal{KL}\left(B_1 \otimes B_2\right)&\leq D_{KL}\left(B_1\otimes B_2\|B_1^*\otimes B_2^*\right) \label{eq:kl_ad1}\\
  &= \max_{i_1,i_2,j_1,j_2} \sum_{k_1,k_2} p\left(k_1k_2\vert j_1j_2, i_1i_2\right)\log\left(\frac{p\left(k_1k_2\vert j_1j_2, i_1i_2\right)}{p^*\left(k_1k_2\vert j_1j_2, i_1i_2\right)}\right)\label{eq:kl_ad2}\\
  &=\max_{i_1,i_2,j_1,j_2} \sum_{k_1,k_2} p\left(k_1\vert j_1, i_1\right)p\left(k_2\vert j_2, i_2\right)\log\left(\frac{p\left(k_1\vert j_1, i_1\right)p\left(k_2\vert j_2, i_2\right)}{p^*\left(k_1\vert j_1, i_1\right)p^*\left(k_2\vert j_2, i_2\right)}\right)\label{eq:kl_ad3}\\
 &= \max_{i_1,i_2,j_1,j_2} \left[\sum_{k_1}  p\left(k_1\vert j_1, 
i_1\right)\log\left(\frac{ p\left(k_1\vert j_1, i_1\right)}{ p^*\left(k_1\vert j_1, 
i_1\right)}\right) + \sum_{k_2} p\left(k_2\vert j_2, 
i_2\right)\log\left(\frac{p\left(k_2\vert j_2, i_2\right)}{p^*\left(k_2\vert j_2, 
i_2\right)}\right)\right]\label{eq:kl_ad4}\\
 & \leq \max_{i_1,j_1} \sum_{k_1}  p\left(k_1\vert j_1, i_1\right)\log\left(\frac{ 
p\left(k_1\vert j_1, i_1\right)}{ p^*\left(k_1\vert j_1, i_1\right)}\right) + 
\max_{i_2,j_2}\sum_{k_2} p\left(k_2\vert j_2, i_2\right)\log\left(\frac{p\left(k_2\vert 
j_2, i_2\right)}{p^*\left(k_2\vert j_2, i_2\right)}\right)\label{eq:kl_ad5}\\
 &=\mathcal{KL}\left(B_1\right)+\mathcal{KL}\left(B_2\right). \label{eq:kl_ad6}
 \end{align}
Eq. \eqref{eq:kl_ad1} follows from the definition of $\mathcal{KL}\left(B_1 \otimes B_2\right)$, Eq. \eqref{eq:kl_ad2} follows from the definition
of $D_{KL}$ and $B_1\otimes B_2$, Eq. \eqref{eq:kl_ad3} follows from additivity of the relative entropy for independent distributions,
Eq. \eqref{eq:kl_ad4} follows from basic algebra, and Eq. \eqref{eq:kl_ad5} follows from  the fact that 
$B_1^*$ and $B_2^*$ are  behaviors achieving the minimum in Eq. \eqref{eq:defKL} for $B_1$ and $B_2$, respectively.

\end{proof}

\subsection{Distance based monotones}
We now introduce contextuality monotones based on geometric distances, in contrast with the previous defined
quantifier which is based on entropic distances, replacing the relative entropy by some geometric distance defined
over real vector spaces in  Eq.~\eqref{eq:defKL} \cite{BAC17,AT17}. 
Let $D$ be any distance defined in real vector spaces $\mathds{R}^{K}$. We 
define the distance between two 
prepare-and-measure statistics $B$ and $B'$ as
\beq D\left(B, B'\right) \coloneqq \max_{i,j} \;  D\left(p\left(\cdot \vert j,i\right),p\left(\cdot \vert j,i\right)\right).\label{eq:defD_box}\eeq
We can now define the $D$-contextuality distance
\beq \mc{D}\left(B\right)\coloneqq \min_{B'\in \mathsf{NC}\left(\mathcal{S}\right)}D\left(B,B'\right), \label{eq:defD}\eeq
which quantifies the distance, with respect to $D$, from $B$ to the set of noncontextual prepare-and-measure statistics. We focus here on the contextuality quantifier 
obtained when we use the $\ell_1$ norm
\beq D_1\left(x,y\right)=\sum_i \left|x_i-y_i\right|.\eeq

\begin{theorem}
 The $\ell_1$-contextuality distance is a resource monotone with respect 
to the free-operations in $\mc{F}$.
\end{theorem}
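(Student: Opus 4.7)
The plan is to mirror the structure of the proof for the Kullback–Leibler monotone. Given $B \in \mc{O}$, let $B^*$ attain the minimum in the definition of $\mc{D}(B)$, so that $B^* \in \mathsf{NC}(\mathcal{S})$ and $\mc{D}(B) = D_1(B, B^*)$. Since the operations in $\mc{F}$ preserve the noncontextual set, $T(B^*) \in \mathsf{NC}(\mathcal{S})$, and therefore $\mc{D}(T(B)) \leq D_1(T(B), T(B^*))$. The entire task reduces to proving the data-processing-style inequality $D_1(T(B), T(B^*)) \leq D_1(B, B^*)$, i.e. that juxtaposing preparation/measurement preprocessings and an outcome postprocessing cannot increase the $\ell_1$-distance between two behaviours.

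To establish this, I would fix an arbitrary transformed pair $(\tilde{i}, \tilde{j})$ and expand $\tilde{p}(\tilde{k}\vert \tilde{j},\tilde{i}) - \tilde{p}^*(\tilde{k}\vert \tilde{j},\tilde{i})$ using the explicit parametrization from Eq.~\eqref{def:free_operations2}. The difference is a linear combination, with nonnegative coefficients $q_O(\tilde{k}\vert k)\, q_M(j\vert \tilde{j})\, q_P(i\vert \tilde{i})$, of the differences $p(k\vert j,i) - p^*(k\vert j,i)$, so the triangle inequality bounds its absolute value by the same combination of $|p(k\vert j,i) - p^*(k\vert j,i)|$. Summing over $\tilde{k}$ and using $\sum_{\tilde{k}} q_O(\tilde{k}\vert k) = 1$ eliminates the outcome-postprocessing factor, yielding
\begin{equation}
\sum_{\tilde{k}}|\tilde{p}(\tilde{k}\vert \tilde{j},\tilde{i}) - \tilde{p}^*(\tilde{k}\vert \tilde{j},\tilde{i})| \leq \sum_{i,j} q_M(j\vert \tilde{j}) q_P(i\vert \tilde{i}) \sum_k |p(k\vert j,i) - p^*(k\vert j,i)|.
\end{equation}
The right-hand side is a convex combination of the per-context $\ell_1$-distances $\sum_k |p(k\vert j,i) - p^*(k\vert j,i)|$, hence bounded above by their maximum, which is exactly $D_1(B, B^*)$. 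Taking the maximum over $(\tilde{i}, \tilde{j})$ on the left completes the bound.

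The only nonroutine ingredient is recognizing that the stochastic maps provide honest convex combinations, so the triangle inequality suffices here and no analogue of the log-sum inequality used in the Kullback–Leibler proof is needed. I expect no real obstacle: the proof is essentially the statement that classical stochastic channels are contractive in total-variation distance. The one thing to keep in mind is that $q_O$ may depend on $\tilde{j}$; this dependence is harmless because the normalization $\sum_{\tilde{k}} q_O(\tilde{k}\vert k) = 1$ holds pointwise in $\tilde{j}$, so the estimate above goes through verbatim.
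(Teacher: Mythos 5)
Your proposal is correct and follows essentially the same route as the paper's own proof: bound $\mc{D}(T(B))$ by $D_1(T(B),T(B^*))$ using preservation of $\mathsf{NC}(\mathcal{S})$, expand via the parametrization of $T$, apply the triangle inequality, use $\sum_{\tilde{k}}q_O(\tilde{k}\vert k)=1$, and bound the resulting convex combination by the maximum over $(i,j)$. Your closing observation that this is just contractivity of stochastic channels in total-variation distance, and that the possible $\tilde{j}$-dependence of $q_O$ is harmless, matches the paper's argument step for step.
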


\begin{proof}
  Given $B \in \mathcal{O}$, let $B^*$ be the noncontextual prepare-and-measure statistics achieving the minimum in
 Eq.~\eqref{eq:defD}. Given $T \in \mathcal{F}$, we have
 \begin{widetext}
  \begin{align}
   \mathcal{D}\left(T\left(B\right)\right)&\leq \max_{\tilde{i},\tilde{j}}\sum_{\tilde{k}} \left|p\left(\tilde{k}\vert \tilde{j},\tilde{i}\right)-p^*\left(\tilde{k}\vert \tilde{j}, \tilde{i}\right)\right| \label{eq:dist1}\\   
  &= \max_{\tilde{i},\tilde{j}} \sum_{\tilde{k}} \left| \sum_{i,j,k} q_O\left(\tilde{k}\vert k\right)\left(p\left(k\vert j,i\right)-p^*\left(k\vert j,i\right)\right)q_M\left(j\vert \tilde{j}\right)q_P\left(i\vert \tilde{i}\right)\right|\label{eq:dist2}\\
  &\leq \max_{\tilde{i},\tilde{j}} \sum_{\tilde{k}} \sum_{i,j,k} q_O\left(\tilde{k}\vert k\right)q_M\left(j\vert \tilde{j}\right)q_P\left(i\vert \tilde{i}\right)\left| \left(p\left(k\vert j,i\right)-p^*\left(k\vert j,i\right)\right)\right|\label{eq:dist3}\\
  &= \max_{\tilde{i},\tilde{j}} \sum_{i,j,k} q_M\left(j\vert \tilde{j}\right)q_P\left(i\vert \tilde{i}\right)\left| \left(p\left(k\vert j,i\right)-p^*\left(k\vert j,i\right)\right)\right|\label{eq:dist4}\\
  &= \max_{i,j} \sum_k \left|p\left(k\vert j,i\right) - p^*\left(k\vert j,i\right)\right|\label{eq:dist5}\\
  &=\mathcal{D}\left(B\right).
  \end{align}
 \end{widetext}
 Eq.~\eqref{eq:dist1} follows form the definition of $\mathcal{D}\left(T(B)\right)$ and the fact that
 $T\left(B^*\right) \in \mathsf{NC}\left(\mathcal{S}\right)$,
Eq.~\eqref{eq:dist2} follows from the definition of $T\in \mathcal{F}$,
Eq.~\eqref{eq:dist3} follows from the triangular inequality for the $\ell_1$ norm, Eq.~\eqref{eq:dist4} follows from
$\sum_{\tilde{k}} q_{O}\left(\tilde{k}\vert k\right)=1$ and  Eq.~\eqref{eq:dist5} follows from the fact that 
the average is smaller than the
largest value. \end{proof}

The $\ell_1$-contextuality distance is subadditive under independent juxtapositions.

\begin{theorem}
\label{thm:D_add}
 Given two behaviors $B_1$ and $B_2$,we have that 
 \beq \mathcal{D}\left(B_1 \otimes B_2\right) \leq \mathcal{D}\left(B_1 \right)+ \mathcal{D}\left(B_2\right).\eeq
\end{theorem}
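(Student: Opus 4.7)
The plan is to mimic the template of the preceding subadditivity proofs (for $\mathcal{C}$ and $\mathcal{KL}$): pick optimal noncontextual witnesses for each factor, form their juxtaposition as the candidate noncontextual approximation to $B_1\otimes B_2$, and then bound the $\ell_1$-distance of the product behaviors by the sum of $\ell_1$-distances of the factors.

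Concretely, I would begin by letting $B_1^*\in\mathsf{NC}(\mathcal{S}_1)$ and $B_2^*\in\mathsf{NC}(\mathcal{S}_2)$ be the noncontextual behaviours achieving the minima in Eq.~\eqref{eq:defD} for $B_1$ and $B_2$, so that $\mathcal{D}(B_j)=D(B_j,B_j^*)$ for $j=1,2$. By the earlier theorem stating that the juxtaposition of two noncontextual behaviours is noncontextual, $B_1^*\otimes B_2^*$ is a feasible point of the minimisation defining $\mathcal{D}(B_1\otimes B_2)$, which gives
\begin{equation}
\mathcal{D}(B_1\otimes B_2)\leq D(B_1\otimes B_2,\,B_1^*\otimes B_2^*).
\end{equation}

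Next I would expand the right-hand side using the definition \eqref{eq:defD_box} of $D$ on product behaviours and apply the elementary tensor-product inequality $|ab-cd|\leq |a-c|\,b+c\,|b-d|$ with $a=p(k_1|j_1,i_1)$, $b=p(k_2|j_2,i_2)$, $c=p^*(k_1|j_1,i_1)$, $d=p^*(k_2|j_2,i_2)$. Summing the bound over $k_1,k_2$, the cross-sums collapse because $\sum_{k_2}p(k_2|j_2,i_2)=1$ and $\sum_{k_1}p^*(k_1|j_1,i_1)=1$, leaving
\begin{equation}
\sum_{k_1,k_2}|p\,p - p^*p^*|\leq \sum_{k_1}|p(k_1|j_1,i_1)-p^*(k_1|j_1,i_1)|+\sum_{k_2}|p(k_2|j_2,i_2)-p^*(k_2|j_2,i_2)|.
\end{equation}

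Finally, taking the maximum over $(i_1,i_2,j_1,j_2)$ and using that the max of a sum is upper-bounded by the sum of the maxes, the two terms decouple into $D(B_1,B_1^*)+D(B_2,B_2^*)=\mathcal{D}(B_1)+\mathcal{D}(B_2)$, completing the argument. The only mildly delicate step is the tensor-product bound on $|ab-cd|$ together with the collapse of the marginal sums; the rest is the same bookkeeping used in the proofs of Thm.~\ref{thm:cf_ad} and the corresponding statement for $\mathcal{KL}$, so I expect no genuine obstacle.
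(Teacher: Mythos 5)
Your proposal is correct and follows essentially the same route as the paper: take the juxtaposition $B_1^*\otimes B_2^*$ of the optimal noncontextual approximants as a feasible point, split $\left|pp-p^*p^*\right|$ via the cross term $p^*(k_1\vert j_1,i_1)\,p(k_2\vert j_2,i_2)$, collapse the marginal sums using normalization, and bound the max of the sum by the sum of the maxes. No gaps.
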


\begin{proof}
 Let $B_1^*$ and $B_2^*$ be the behaviors achieving the minimum in Eq. \eqref{eq:defD} for $B_1$ and $B_2$, respectively.
 Then, we have
 \begin{align}
  \mathcal{D}\left(B_1 \otimes B_2\right)&\leq D\left(B_1\otimes B_2,B_1^*\otimes B_2^*\right) \label{eq:D_ad1}\\
  &= \max_{i_1,i_2,j_1,j_2} \sum_{k_1,k_2} \left|p\left(k_1k_2\vert j_1j_2, i_1i_2\right)-p^*\left(k_1k_2\vert j_1j_2, i_1i_2\right)\right|\label{eq:D_ad2}\\
  &=\max_{i_1,i_2,j_1,j_2} \sum_{k_1,k_2} \left|p\left(k_1\vert j_1, i_1\right)p\left(k_2\vert j_2, i_2\right)-p^*\left(k_1\vert j_1, i_1\right)p^*\left(k_2\vert j_2, i_2\right)\right|\label{eq:D_ad3}\\
   &\leq \max_{i_1,i_2,j_1,j_2} \sum_{k_1,k_2} \left[\left|p\left(k_1\vert j_1, i_1\right)p\left(k_2\vert j_2, i_2\right)-p^*\left(k_1\vert j_1, i_1\right)p\left(k_2\vert j_2, i_2\right)\right|\right.\nonumber \\
   & \hspace{8em} + \left.\left|p^*\left(k_1\vert j_1, i_1\right)p\left(k_2\vert j_2, i_2\right)-p^*\left(k_1\vert j_1, i_1\right)p^*\left(k_2\vert j_2, i_2\right)\right|\right]\label{eq:D_ad4}\\
  &\leq \max_{i_1,i_2,j_1,j_2} \left[\sum_{k_1, k_2} p\left(k_2\vert j_2, i_2\right)\left|p\left(k_1\vert j_1, i_1\right)-p^*\left(k_1\vert j_1, i_1\right)\right|\right.\nonumber \\
    &  \hspace{8em} + \left.\sum_{k_1,k_2}p^*\left(k_1\vert j_1, i_1\right)\left|p\left(k_2\vert j_2, i_2\right)-p^*\left(k_2\vert j_2, i_2\right)\right|\right]\label{eq:D_ad5} \\
    &= \max_{i_1,i_2,j_1,j_2} \left[\sum_{k_1}\left|p\left(k_1\vert j_1, i_1\right)-p^*\left(k_1\vert j_1, i_1\right)\right|
   + \sum_{k_2}\left|p\left(k_2\vert j_2, i_2\right)-p^*\left(k_2\vert j_2, i_2\right)\right|\right]\label{eq:D_ad6} \\
   &=\mathcal{D}\left(B_1 \right)+ \mathcal{D}\left(B_2\right)
 \end{align}
Eq. \eqref{eq:kl_ad1} follows from the definition of $\mathcal{KL}\left(B_1 \otimes B_2\right)$, Eq. \eqref{eq:kl_ad2} follows from the definition
of $D_{KL}$ and $B_1\otimes B_2$, Eq. \eqref{eq:kl_ad3} follows from additivity of the relative entropy for independent distributions,
Eq. \eqref{eq:kl_ad4} follows from basic algebra, and Eq. \eqref{eq:kl_ad5} follows from the definition of the fact that 
$B_1^*$ and $B_2^*$ be the behaviors achieving the minimum in Eq. \eqref{eq:defKL} for $B_1$ and $B_2$, respectively.

\end{proof}

\subsection{Trace Distance}

Instead of taking the maximum over preparations and measurements in Eq.~\eqref{eq:defD_box}, we can take the average value 
to define the uniform $D$-contextuality distance
% Given $B=\{\behav{k}{j}{i}\}_{k \in [d], i \in I, j \in J}$ a 
% prepare-and-measure statistics, its normalized contextual measure $Cont(B)$ 
% is defined~\cite{BAC17} as follows:
%
\begin{equation}
 \label{eq:def_trace_distance}
 \mathcal{D}_u(B):= \frac{1}{2 I  J}\min_{B^{\prime} \in \mathsf{NC}\left(\mc{S}\right)} 
D\left( B,B^{\prime} \right) %\label{eq:defuD}
\end{equation}
with $B^{\prime}$ taken over all noncontextual prepare-and-measure 
statistics, and $D$ being some distance defined over real vector spaces. Of special importance is the uniform contextuality distance defined
by the trace 
norm $\ell_1$~\cite{NC00,BAC17, AT17}. 

% \begin{theorem}
%  Cont is a resource monotone.
% \end{theorem}
% \begin{proof}
%  Let $B=\{\behav{k}{j}{i}\}_{k \in [d], i \in I, j \in J}$ be a given 
% prepare-and-measure statistics, and let $T \in \mc{F}$ be a free 
% operation. We must be able to show that 
% \beq
% Cont(T(B)) \leq Cont(B).
% \eeq
% Let $B^{\ast}$ be a \pandms which realizes the minimum $ 
% 2\vert I \vert \cdot \vert J \vert Cont(B)$, then:
% \begin{widetext}
% \begin{align}
%  Cont(T(B)) &= \frac{1}{2 \vert I \vert \cdot \vert J \vert} 
% \min_{B^{\prime}} \Vert T(B)- B^{\prime} \Vert_{1} \\
%             &\leq \frac{1}{2 \vert I \vert \cdot \vert J \vert}  \Vert T(B) 
% - T(B^{\ast}) \Vert_{1} \\
%             & \leq  \frac{1}{2 \vert I \vert \cdot \vert J \vert} 
% \sum_{i,j,k} \left( \underbrace{\sum_{\td{k}}q_{O}(\td{k} \vert k)}_{=1} 
% \right)\left( \sum_{\td{j}} \underbrace{q_{M}(j \vert \td{j})}_{\leq 1} 
% \right)\left( \sum_{\td{i}}\underbrace{q_{P}(i \vert \td{i})}_{\leq 1} 
% \right) \vert \behav{k}{j}{i} - p^{\ast}(k \vert j,i) \vert \\
%             & \leq  \frac{1}{2 \vert I \vert \cdot \vert J \vert} \times 
% \vert I \vert \cdot \vert J \vert \times \sum_{k,j,i} \vert 
% \behav{k}{j}{i} - p^{\ast}(k \vert j,i) \vert \\
%             &=Cont(B).
% \end{align}
% \end{widetext}

%\end{proof}

Again, the trace distance $\mathcal{D}_u$ is subadditive under independent 
juxtapositions.

\begin{theorem}
 Given two behaviors $B_1$ and $B_2$,we have that 
 \beq \mathcal{D}_u\left(B_1 \otimes B_2\right) \leq \mathcal{D}_u\left(B_1 \right)+ \mathcal{D}_u\left(B_2\right).\eeq
\end{theorem}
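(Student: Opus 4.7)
My plan is to imitate the proof of Theorem~\ref{thm:D_add} almost verbatim, with the maximum over $(i,j)$ replaced by the normalized sum that defines $\mathcal{D}_u$. First I would fix optimal noncontextual behaviors $B_1^*$ and $B_2^*$ that attain the minima in Eq.~\eqref{eq:def_trace_distance} for $B_1$ and $B_2$ respectively, and observe (as noted earlier in the manuscript) that $B_1^*\otimes B_2^*$ is noncontextual. This makes it a valid competitor in the minimization defining $\mathcal{D}_u(B_1\otimes B_2)$, so
\beq
\mathcal{D}_u(B_1\otimes B_2) \leq \frac{1}{2\,I_1 I_2 J_1 J_2}\sum_{i_1,i_2,j_1,j_2}\sum_{k_1,k_2}\bigl|p(k_1 k_2\vert j_1j_2,i_1i_2)-p^*(k_1 k_2\vert j_1j_2,i_1i_2)\bigr|.
\eeq

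The key step is then the factorized-difference estimate. Using the product structure of juxtaposed behaviors and adding and subtracting $p^*(k_1\vert j_1,i_1)p(k_2\vert j_2,i_2)$ (as in Eqs.~\eqref{eq:D_ad4}--\eqref{eq:D_ad5}), I would bound
\beq
|p_1 p_2 - p_1^* p_2^*| \leq p_2\,|p_1-p_1^*| + p_1^*\,|p_2-p_2^*|.
\eeq
Summing over $k_1,k_2$ and using $\sum_{k_2}p(k_2\vert j_2,i_2)=1$ and $\sum_{k_1}p^*(k_1\vert j_1,i_1)=1$ yields
\beq
\sum_{k_1,k_2}|p_1 p_2 - p_1^* p_2^*| \leq \sum_{k_1}|p(k_1\vert j_1,i_1)-p^*(k_1\vert j_1,i_1)| + \sum_{k_2}|p(k_2\vert j_2,i_2)-p^*(k_2\vert j_2,i_2)|.
\eeq

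The last step is bookkeeping: summing over $i_1,i_2,j_1,j_2$, the first term is independent of $(i_2,j_2)$ and so picks up a factor $I_2 J_2$, while the second picks up $I_1 J_1$. After dividing by $2\,I_1 I_2 J_1 J_2$ these factors cancel precisely, giving $\mathcal{D}_u(B_1)+\mathcal{D}_u(B_2)$. The only potentially subtle point — and the one worth double-checking — is that the normalization $\frac{1}{2IJ}$ is consistent with interpreting $D(B,B')$ as a sum (rather than a max) over input pairs; once that convention is fixed, the counting argument goes through cleanly, and no appeal to the structure of the free operations is needed since this statement concerns only the juxtaposition $\otimes$.
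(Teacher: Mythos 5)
Your proof is correct and is essentially the argument the paper intends: the paper's own ``proof'' is just the one-line remark that it is analogous to Theorem~\ref{thm:D_add}, and you have carried out that analogy explicitly, including the correct observation that the only change is replacing the $\max_{i,j}$ by the normalized sum and checking that the factors $I_2J_2$ and $I_1J_1$ cancel against the normalization $\tfrac{1}{2I_1I_2J_1J_2}$. Your flagged caveat about reading $D(B,B')$ in Eq.~\eqref{eq:def_trace_distance} as a sum rather than a max is exactly the convention the paper adopts (``we can take the average value''), so nothing further is needed.
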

\begin{proof}
The proof is analogous to the proof of Thm. \ref{thm:D_add}.
\end{proof}

Although $\mathcal{D}_u$ is not a monotone under the entire class of of free operations $\mathcal{F}$,
it is a suitable contextuality quantifier when the sets of preparations and measurements are fixed, with the advantage that, unlike
$\mathcal{L}$ and $\mathcal{D}$, the uniform contextuality distance $\mathcal{D}_u$ defined with the $\ell_1$ norm can be computed
efficiently using linear programming.
\section{Conclusion}
\label{sec:conc}

Motivated by the recognition of contextuality as a potential resource for 
computation and information processing, we develop a resource theory for generalized contextuality
that can be applied to arbitrary prepare-and-measure experiments. We 
introduce a minimal  
set of free operations (minimal in the sense that any other possible, and 
physically meaningful, set of physical operations should contain ours as 
a subset) with a clear
operational interpretation and  explicit analytical parametrization and show that several natural contextuality quantifiers
are indeed monotones under this class of free operations. With the recognition that membership testing in the set  of
noncontextual prepare-and-measure statistics can be done efficiently using linear programming, many of these quantifiers can also be computed 
efficiently in the same way.
This framework is useful to classify, quantify,
and manipulate contextuality as a formal resource.
It would be interesting to investigate whether there is a maximally-contextual single prepare-and-measure statistics that 
serve as contextuality bits  for all scenarios, or to identify
what is the simplest scenario admitting inequivalent (not freely
interconvertible) classes of contextuality. Another important issue is to investigate  protocols for contextuality distillation
relying only on the set of free operations. This framework provides a
new interpretation of generalized contextuality, now considered a useful resource rather than an odd feature 
exhibited for quantum physics~\cite{CSW10,CSW14,ATC14}. Indeed, as it has
occurred with entanglement~\cite{BG15,HHHH09,RK17,SHN16} over the years, 
we expect that works like the present one can shed new light on the 
phenomenon, giving to it new insights and making it easier to understand.

\begin{acknowledgments}
The authors thank Marcelo Terra Cunha for 
all suggestions on the manuscript, and the International Institute 
of Physics for its support and hospitality.
BA and CD also acknowledges financial support from the Brazilian 
ministries MEC and MCTIC, and CNPq.
\end{acknowledgments}

\bibliography{biblio2}

\end{document}